\lstdefinestyle{sysj}{
  belowcaptionskip=1\baselineskip,
  breaklines=true,
  xleftmargin=\parindent,
  language=java,
  keywordstyle=\bfseries\color{blue},
  commentstyle=\itshape,
  numbers=left,
  numbersep=2pt,
  showstringspaces=false,
  basicstyle=\scriptsize\ttfamily,
  identifierstyle=\color{black},
  stringstyle=\color{orange},
}
\newtheorem{definition}{Definition}
\newtheorem{lem}{Lemma}
\newtheorem*{rem}{Remark}
\newtheorem{thm}{Theorem}
\begin{document}
%
\title{A unified framework for modeling and implementation of hybrid
  systems with synchronous controllers}
%
%
%
%

\author{Avinash~Malik and Partha~Roop
  \IEEEcompsocitemizethanks{\IEEEcompsocthanksitem A. Malik and P. Roop
    are with the Department of Electrical and Computer Engineering,
    University of Auckland, Auckland,
    NZ,\protect\\
    E-mail: avinash.malik@auckland.ac.nz, p.roop@auckland.ac.nz}
}

\IEEEcompsoctitleabstractindextext{%
\begin{abstract}

  This paper presents a novel approach to including
  \textit{non-instantaneous} discrete control transitions in the linear
  hybrid automaton approach to simulation and verification of hybrid
  control systems. In this paper we study the control of a continuously
  evolving analog plant using a controller programmed in a synchronous
  programming language. We provide extensions to the synchronous subset
  of the SystemJ programming language for modeling, implementation, and
  verification of such hybrid systems. We provide a sound
  \textit{rewrite} semantics that approximate the evolution of the
  continuous variables in the discrete domain inspired from the
  classical supervisory control theory. The resultant discrete time
  model can be verified using classical model-checking tools. Finally,
  we show that systems designed using our approach have a higher
  fidelity than the ones designed using the hybrid automaton approach.

\end{abstract}

\begin{keywords}
Hybrid automata, Synchronous languages, Semantics, Compilers,
Verification, Control theory.
\end{keywords}}

\maketitle

\IEEEdisplaynotcompsoctitleabstractindextext

%
\IEEEpeerreviewmaketitle

\section{Introduction}
%
%

%
%
%
%

\IEEEPARstart{M}{odern} closed loop control systems consist of a
physical process (termed the plant) controlled by a discrete embedded
controller. The plant is a continuously evolving (analog) system, which
is sampled by an analog to digital converter at specific
intervals. These samples are then input into the discrete controller,
which makes decisions depending upon the control logic and feeds the
resultant outputs back to the plant to control it.  The continuous time
nature of the plant and the discrete time nature of the controller
together form a hybrid system.  The \textit{Linear Hybrid
  Automaton}~\cite{Henzinger:1996:THA:788018.788803} is arguably the
most popular approach for modeling such hybrid systems. A linear hybrid
automaton captures the continuous evolution of the plant model as first
order ordinary differential equations (ODEs). In every \textit{control
  mode} of the discrete controller, the plant variables evolve according
to a set of ODEs, until an invariant condition holds. As soon as the
invariant condition is violated, an instantaneous switch is made by the
controller to a different control mode. The continuous variables in the
plant model can now evolve with a new set of ODEs. Thus, the controller
changes the plant behavior through this mode switch.

Control systems are reactive systems~\cite{Harel:1987:SVF:34884.34886}
that have an ongoing interaction with their respective plant in terms of
discrete time steps. At the start of each time step, the inputs from the
plant are captured, a reaction function is called to process these
inputs, and finally the outputs are emitted back to the
plant. Synchronous languages such as Esterel~\cite{berry96},
Lustre~\cite{nhal91}, Signal~\cite{pgue91} are used extensively to
implement such reactive systems, since synchronous programs can be
translated into transition systems in polynomial time even with
exponentially large number of states. Furthermore, model-checking of
temporal logic specifications~\cite{clarke-book00} can be directly
performed on these resultant symbolic transition systems to
\textit{guarantee} functional and real-time properties of the
controller. Synchronous languages, operate based on the principle of
\textit{synchrony hypothesis}, which requires that the reaction function
takes \textit{zero} time and the outputs are produced instantaneously.

Given the instantaneous mode switch of the hybrid automaton and the zero
delay computation model of the synchronous languages; it should
\textit{not} be surprising then that controllers \textit{modeled} in
hybrid automaton should be \textit{implemented} with synchronous
languages since semantically, the discrete step: mode switch and the
reaction function in both models takes zero time. However, in a real
system no controller takes zero time. The synchronous language community
has addressed this problem by considering the worst case reaction time
(WCRT) of a synchronous program~\cite{boldt07}. For a synchronous
controller; the WCRT, which is akin to the critical path of a program,
determines the inter-arrival time of input events. Statically obtaining
a tight WCRT for synchronous controllers is a well studied
problem~\cite{proop10,wilhelm08,boldt07}. To the best of our knowledge
an equivalent approach to incorporating time-delayed mode switches has
not been addressed by the hybrid automaton community. Consequently, any
results obtained from a system modeled using a hybrid automaton has low
fidelity, i.e., does not behave as expected due delays in making control
decisions.

In this paper \textbf{our main contribution} is: \textit{a powerful
  language with a precise formal semantics that allows the modeling,
  verification and implementation of non-trivial synchronous controllers
  with time-delays within their continuous environment.}  Our
contributions can be refined as follows:

\begin{compactitem}
\item Automatic, compiler driven, symbolic representation of the hybrid
  systems designed in the proposed hybrid synchronous language called
  \textit{HySysJ}.
\item A precise formal and novel natural semantics for compilation of
  hybrid systems.
\item The discrete approximation of hybrid system designed in HySysJ
  based on discrete linear time invariant systems from classical
  supervisory control theory~\cite{Astrom:2008:FSI:1816978}.
\end{compactitem}

Rest of the paper is arranged as follows: Section~\ref{sec:related-work}
gives a detailed description of the current state of the art in hybrid
system design, highlighting the
deficiencies. Section~\ref{sec:preliminaries} introduces the
preliminaries required to read the rest of the paper. We motivate the
problem using an example in Section~\ref{sec:motivating-example}. The
basic language definition is provided in
Section~\ref{sec:base-language}, which is further extended with
continuous time constructs and semantics in
Section~\ref{sec:example-driv-inform}. The relation of the proposed
approach to classical supervisory control theory is presented in
Section~\ref{sec:determ-value-wcrt}.
The verification procedure carried out on the motivating example in the
resultant new language is given in
Section~\ref{sec:manuf-syst-revis}. Finally, we conclude in
Section~\ref{sec:conclusions}.

\section{Related work}
\label{sec:related-work}

Many languages have been proposed for modeling and verification of
Hybrid systems. A good survey can be found
in~\cite{Carloni:2006:LTH:1166403.1166404}. The first class of languages
are the hardware description languages enhanced with the analog mixed
signal (AMS) extensions, such as; VHDL-AMS and
SystemC-AMS~\cite{1205169,Pecheux:2006:VVA:2298529.2301129}. These
languages lack any sort of formal semantics and hence, cannot be used
for formal verification. The second class is the data-flow languages
such as Z\'elus and
SCADE/Lustre~\cite{DBLP:conf/hybrid/BourkeP13,nhal91}, which approximate
the continuous ODEs. This approach of approximating the continuous ODE
behavior is essential, because model-checking most system properties,
including safety properties, are known to be undecidable for general
hybrid systems~\cite{Henzinger:1996:THA:788018.788803}. The
aforementioned data-flow languages are also endowed with formal
mathematical semantics.  This conjunction of approximation of continuous
behavior along with formal mathematical foundations makes these
languages potentially suitable for model-checking. But, \textit{unlike}
us, the overall hybrid model does not account for the non-zero
mode-switch times and hence, these programming languages suffer from the
same problems as the hybrid automata.

Finally, the work closest to the one described in this article is done
by: (1) Closse et al.~\cite{bertin2001taxys}, where they extend the
Esterel language to model timed automata~\cite{alur94}, i.e., ODEs with
rate of change always equal to 1. In this proposal we are able to model
the more general hybrid rather than its subset timed automaton and (2)
Baldamus et al.~\cite{baldamus2002modifying}, which is a seminal work in
extending synchronous imperative languages to model hybrid
automaton. This work is extended further and completed by giving a
formal treatment by Bauer et al.~\cite{bauer2010synchronous}. The work
described herein differs significantly from
both;~\cite{bauer2010synchronous} and~\cite{baldamus2002modifying} in
that they do not approximate the continuous behavior of the plant,
instead all discrete transitions are carried out and then a so called
continuous phase is launched, which models the continuous evolution of
the plant until the invariant condition holds, just like in hybrid
automaton. Since these approaches derive their semantics from hybrid
automaton, they inherit the same problem described in
Section~\ref{sec:motivating-example}, i.e., non-zero mode-switch
transitions cannot be captured in the semantics.

Overall, the formal foundations of the modeling/implementation language
proposed in this paper are truly unique, since the semantics unify the
real-time analysis of synchronous programs~\cite{boldt07} and the hybrid
modeling languages into a single framework inspired from classical
supervisory control theory.

\section{Preliminaries}
\label{sec:preliminaries}

In this section we give the background information required for
understanding the rest of the paper.

\subsection{The hybrid automaton}
\label{sec:hybrid-automaton}

We use the definition of linear Hybrid automaton
from~\cite{Henzinger:1996:THA:788018.788803}.

\begin{definition} A \textit{hybrid automaton} \textrm{H} is a tuple
  $(Loc,Var,Con,Lab,Edge,Act,Inv,Init)$ where
  \begin{compactitem}
    
  \item $(Loc,Var,Con,Lab,Edge,Act,Inv,Init)$ is a labelled transition
    system with $Loc$ a finite set of locations, real-valued variables
    $Var$, $V$ the set of valuation $v: Var \rightarrow \mathbb{R}$, and
    $\Sigma = Loc \times V$ the set of states, $Init \subseteq \Sigma$
    of initial states.
    
  \item A function $Con: Loc \rightarrow 2^{Var}$ assigning a set of
    controlled variables to each location
    
  \item a finite set of labels $Lab$, including the stutter label
    $\tau \in Lab$.

  \item $Act$ (Activities) is a function assigning a set of activities
    \mbox{$f: \mathbb{R}_{\geq 0} \rightarrow V$} to each location
    ($l \in Loc$) which are time-invariant meaning that $f \in Act(l)$
    implies \mbox{$(f+t) \in Act(l)$} where
    \mbox{$(f+t)(t') = f(t+t'), \forall t' \in \mathbb{R}_{\geq 0}$}
    
  \item a function $Inv$ assigning an invariant $Inv(l) \subseteq V$ to
    each location $l \in Loc$.
    
  \item A finite set
    $Edge \subseteq Loc \times Lab \times 2^{V^2} \times Loc$ of edges
    including $\tau$-transitions $(l,\tau,Id,l)$ for each location
    $l\in Loc$ with $Id =\{(v,v')|\forall x \in Con(l).v'(x) = v(x)\}$,
    and where all edges with label $\tau$ are $\tau$-transitions.

  \end{compactitem}
\end{definition}


\begin{definition}
  The semantics of a hybrid automaton $H$ is given by the operational
  semantics consisting of two rules, one for discrete instantaneous
  transition steps and one for continuous time steps.
  
  \begin{compactitem}
   
  \item Discrete step semantics (mode-switch semantics):
    \[
    \frac{(l,a,(v,v'),l') \in Edge\ \  v' \in Inv(l')}{(l,v) \xrightarrow{a} (l',v')}
    \]
  
\item Time step semantics
    \[
    \frac{f\in Act(l)\ f(0)=v\ f(t)=v'\ t\geq 0\ f([0,t])
      \subseteq Inv(l))}{(l,v) \xrightarrow{t} (l',v')}
    \]
  \end{compactitem}
\end{definition}

An execution step $\rightarrow = \xrightarrow{a} \cup \xrightarrow{t}$
of H is either a discrete step or a time step. A path $\pi$ of H is a
sequence $\sigma_0 \rightarrow \sigma_1 \ldots$ with
$\sigma_0 = (l_0,v_0) \in Init$, $v_0 \in Inv(l_0)$, and
$\sigma_0 \rightarrow \sigma_{i+1} \forall i \geq 0$.

\subsubsection{An example linear hybrid automaton}
\label{sec:an-example-linear-1}

We will use a closed loop manufacturing system example shown in
Figure~\ref{fig:1} to elaborate the semantics of hybrid automata.

\begin{figure}[t!]
  \centering
  \includegraphics[scale=0.6]{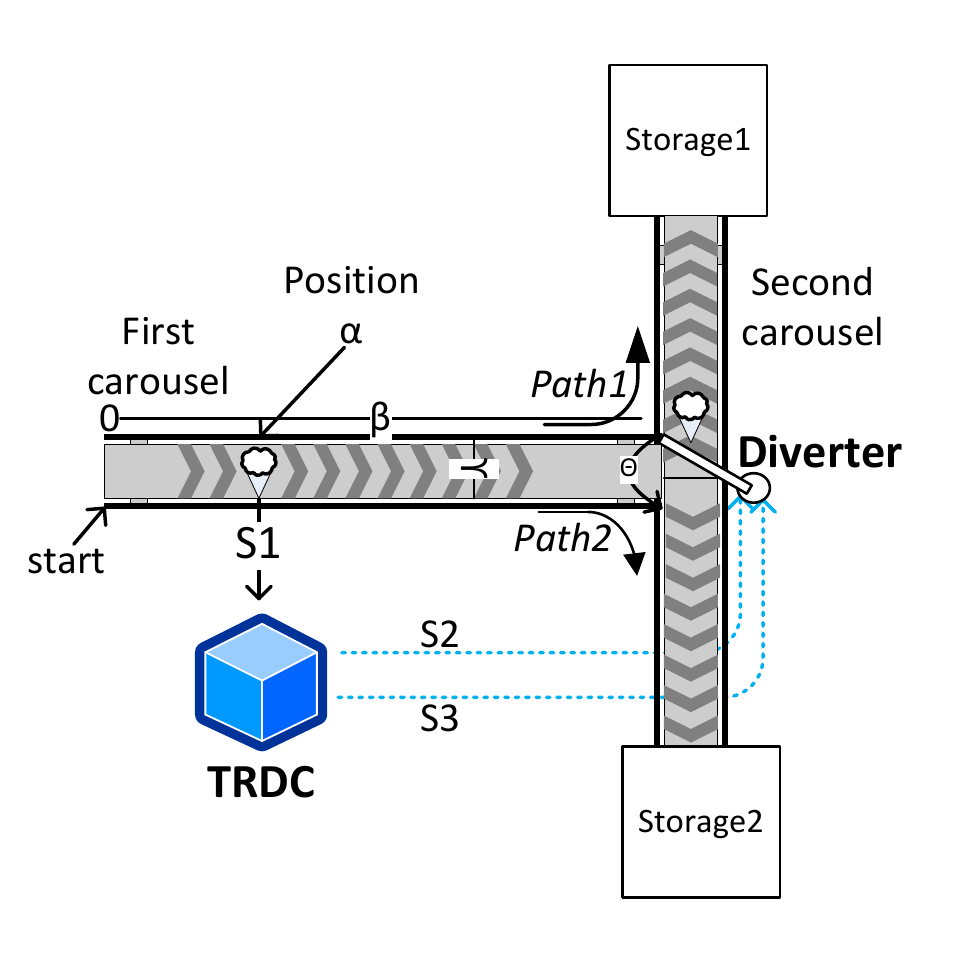} 
  \caption{The pictorial representation of the manufacturing
  control system}
  \label{fig:1}
\end{figure}

\begin{figure*}[t!]
  \centering
  \subfloat[The hybrid automaton modelling the manufacturing control
  system]{\scalebox{0.5}{\input{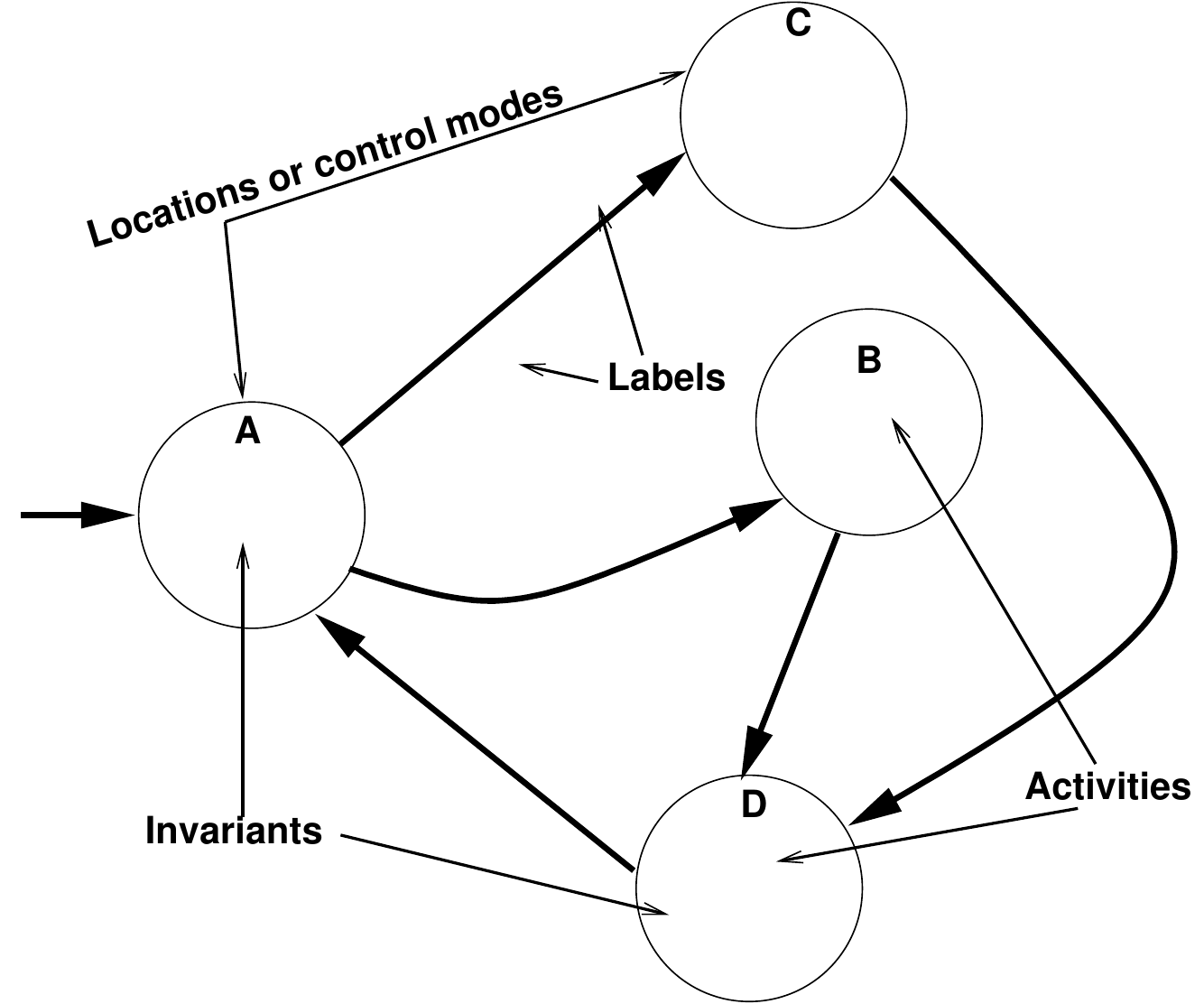_t} \label{fig:1b}}}
  \qquad
  \subfloat[The behavior of the manufacturing control system as modeled
  by the hybrid automaton]{
    \scalebox{0.6}{\input{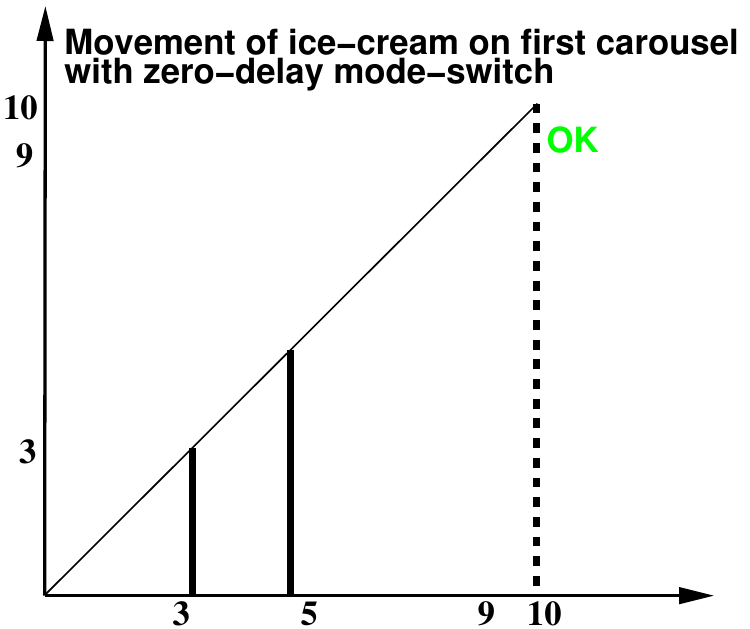_t}}
    \label{fig:1c}
  }
  \caption{A simple carousel control system and its hybrid automaton}
  \label{fig:1n}
\end{figure*}

Consider that we are designing an automated ice-cream manufacturing
system as shown in Figure~\ref{fig:1}. The system consists of two
carousel belts that carry an ice-cream to either \texttt{Storage1} or
\texttt{Storage2} depending upon the RFID tag on the ice-cream. The size
of the first carousel is $\beta \times \gamma$ units. A diverter is
placed at the end of the first carousel, $\beta$ units from the start. A
tag reader and diverter controller (\texttt{TRDC}) is placed at position
$\alpha$ from the start of the first carousel. When the ice-cream is
detected, the \texttt{TRDC} reads the tag on the ice-cream and then
sends a control message to the diverter in order to move it into the
correct position, so that once the ice-cream reaches position $\beta$,
it is diverted to the correct storage station. The detection of the
ice-cream on the first carousel is indicated by the emission of signal
$S1$. Signal $S2$, emitted from the \texttt{TRDC}, moves the diverter
$\theta$ arc-length units in order to divert the ice-cream to
\texttt{Storage1}, while signal $S3$ does the opposite. Furthermore, the
carousel and the diverter move at a constant velocity of 1.

The hybrid automaton modeling the manufacturing system is shown in
Figure~\ref{fig:1b}. The elements of the tuple defining the syntax of
the hybrid automaton are indicated in Figure~\ref{fig:1b} for sake of
understanding. Initially, the ice-cream and the diverter are at position
0, denoted by the continuous variables $x$ and $y$, respectively. In
mode \textbf{A}, the ice-cream travels on the first carousel at a
constant velocity of 1 until it reaches position $\alpha$. As soon as
the ice-cream reaches $\alpha$, signal $S1$ is emitted with the TAG
value \texttt{Storage1}, say. Signal $S2$ is emitted instantaneously and
the hybrid automaton moves to mode \textbf{B}. In this mode, the
ice-cream and the diverter, both move at a constant velocity until the
diverter covers the distance of $\theta$ arc-length units. Finally, a
transition is made to mode \textbf{D}, where any further distance until
$\beta$ is covered by the ice-cream and then the ice-cream moves onto
the second carousel and is placed into the correct storage.

The movement of the ice-cream for this hybrid automaton assuming
$\alpha=3$, $\beta=10$ and $\theta=6$ is shown in
Figure~\ref{fig:1b}. Assuming instantaneous discrete mode-switch model
of the hybrid automaton, choosing $\alpha=3$ is a feasible solution as
seen in Figure~\ref{fig:1b}. The ice-cream is detected at position 3 on
the first carousel and an \textit{instantaneous} move is made to control
mode \textbf{B} where the ice-cream moves another 6 units ending up at
position 9 when the hybrid automaton is in mode \textbf{D}, which is
less than $\beta = 10$.

\subsection{The synchronous controller}
\label{sec:synchr-model-comp}

\begin{definition}
 A synchronous controller is a tuple $(Q,q_0,I,O,A,T)$ where:
 \begin{compactitem}
 \item $Q$ is the set of states
 \item $q_0 \in Q$ is the starting state
 \item $I$ is the set of input signals
 \item $O$ is the set of output signals
 \item $A$ is the set of actions
 \item $T$ is the transition relation: $T \subseteq Q \times
   \mathcal{B}(I) \times 2^A \times 2^O \times Q$. $\mathcal{B}(I)$ is a
   Boolean expression over the symbols in $I$.
 \end{compactitem}
\end{definition}

Simply put, a synchronous controller is a directed graph with edges
carrying the labels of the form $b/A',O': b \in \mathcal{B}(I), A'
\subseteq A, O' \subseteq O$. Intuitively, each edge can be taken if the
Boolean condition on the edge holds true. Furthermore, actions
(functions) are performed and output signals emitted upon taking the
transition.  

\subsubsection{The timing semantics of synchronous controllers}
\label{sec:timing-semant-synchr}

\begin{figure}[t!]
  \centering 
  \subfloat[Synchronous controller controlling a plant]
  {\includegraphics[scale=0.4]{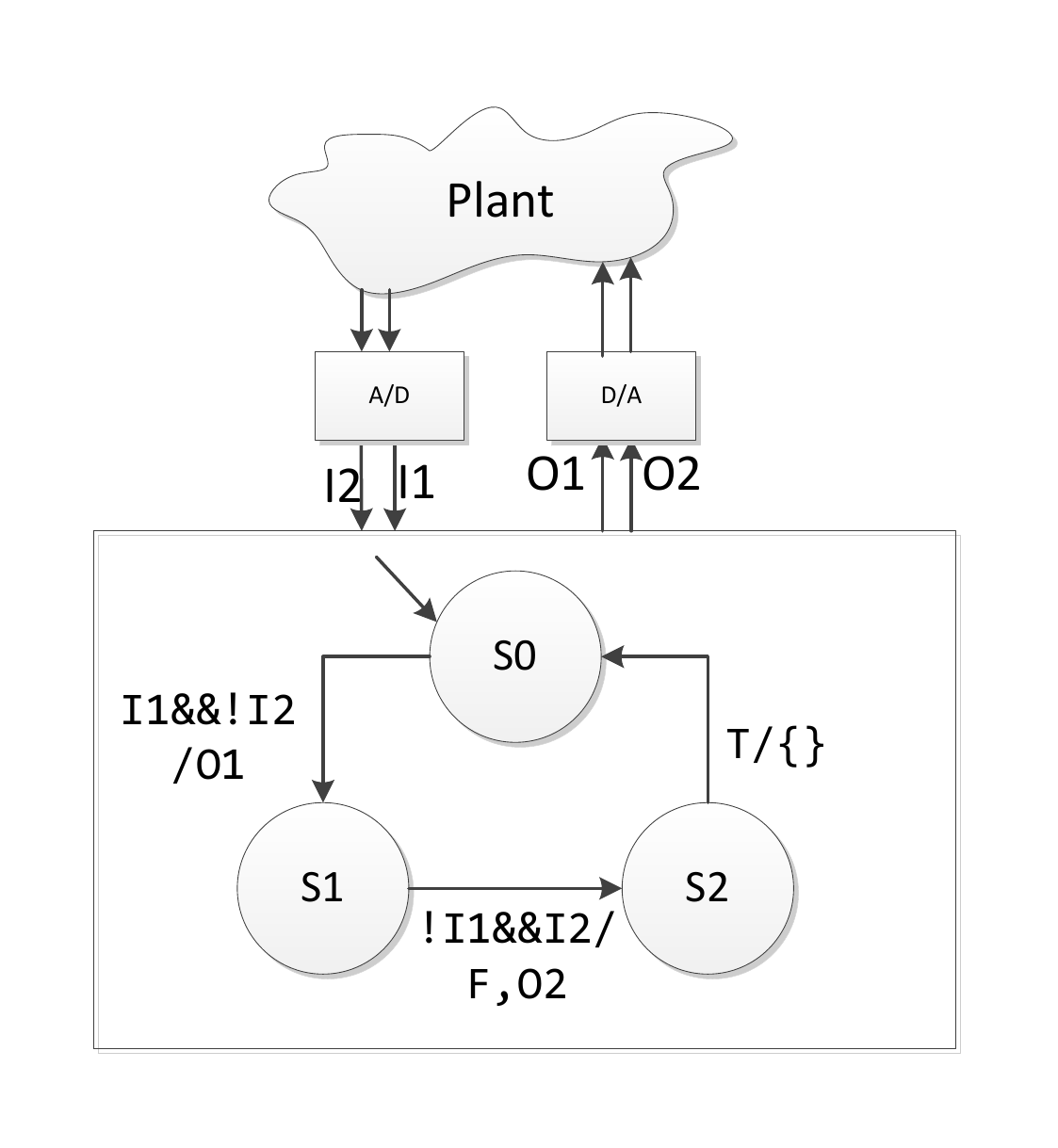} \label{fig:4a}}
  \subfloat[Synchronous controller timing
  diagram]{\includegraphics[scale=0.5]{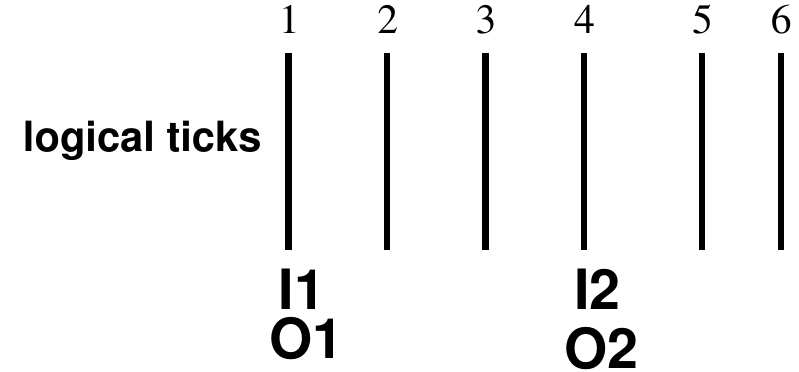}\label{fig:4b}}
  \caption{An example synchronous controller and its timing diagram}
  \label{fig:4}
\end{figure}

Figure~\ref{fig:4a} shows a simple example of a synchronous controller
controlling a plant. The controller's input signal set is $\{I1, I2\}$
and output signals are produced from the set $\{O1, O2\}$. The
transition system for the controller is also shown in
Figure~\ref{fig:4a}. There are three states in the transition
system. The initial state is labeled \textbf{S0}. When signal $I1$ is
produced from the plant, the controller makes a transition to state
\textbf{S1}. In the process also emitting signal $O1$ back to the
plant. Next, when signal $I2$ is produced from the plant, the controller
makes a transition to state \textbf{S2}. Furthermore, the controller
performs an action $F$ and outputs signal $O2$ back to the plant.

The timing diagram for the controller is shown in
Figure~\ref{fig:4b}. Every synchronous controller, following the zero
delay model~\cite{berry96}, progresses in lockstep with a logical clock
tick. The inputs are captured from the plant at the start of the logical
tick, a reaction function is called to process these inputs (in this
case the reaction function is the transition system in
Figure~\ref{fig:4a}) and finally the outputs are produced at the end of
the tick. The logical ticks are shown as bars in Figure~\ref{fig:4b}. At
logical tick 1, the input signal $I1$ is captured from the plant, and
the output signal $O1$ is instantaneously produced at the end of the
logical tick. Similarly, input signal $I2$ is captured at the start of
tick 4 and output signal is emitted back to the plant at the end this
tick -- \textit{instantaneously}.

Unfortunately, execution of every reaction to the input signals takes
some $\delta$ physical time. The zero delay model implicitly requires
that the reaction to the input signals be fast enough in order to not
miss any input events from the plant. In order to satisfy this implicit
restriction, we need to calculate the \textit{Worst Case Reaction Time}
(WCRT) from amongst all the reaction times, which needs to be shorter
than the inter-arrival between any two incoming events. Formally, let
$\{\delta_1,\ldots,\delta_N\}$ be the set of all possible reaction times
for some synchronous controller. Then, $\exists i \in N$, where
\mbox{$WCRT = max (\delta_i)$}. WCRT of any synchronous controller can
be calculated statically irrespective of the plant model. Many different
techniques exist for the calculation of the WCRT of a synchronous
controller~\cite{wilhelm08}.


\section{The problem of time-delayed mode switches}
\label{sec:motivating-example}

  

Let us revisit the manufacturing control system example in
Section~\ref{sec:an-example-linear-1} and use a synchronous language to
implement the \texttt{TRDC} controller that performs the discrete mode
switches in Figure~\ref{fig:1b}. Since the length ($\beta$), the width
($\theta$) of the first carousel and the speed of movement of the
carousel and the diverter are all fixed, we only need to place the
\texttt{TRDC} at the correct position on the first carousel so that the
diverter is in the correct position by the time ice-cream reaches
position $\beta$. Our \textit{goal} is to statically verify that any
ice-cream on the first carousel will be diverted to the correct storage
depending upon its tag. A hybrid automaton should help us model this
system to guarantee this \textit{safety} property. Note that the reader
should interpret the term \textit{verify} loosely, because the
reachability problem for hybrid automata are known
undecidable~\cite{alur1993hybrid}.

\subsection{The hybrid automaton and the worst case reaction time of
  synchronous controllers}
\label{sec:hybrid-model-worst}

\begin{figure}[t!]
  \centering \subfloat[Difference between the modeled system and the
  real system. ]{\scalebox{0.5}{\input{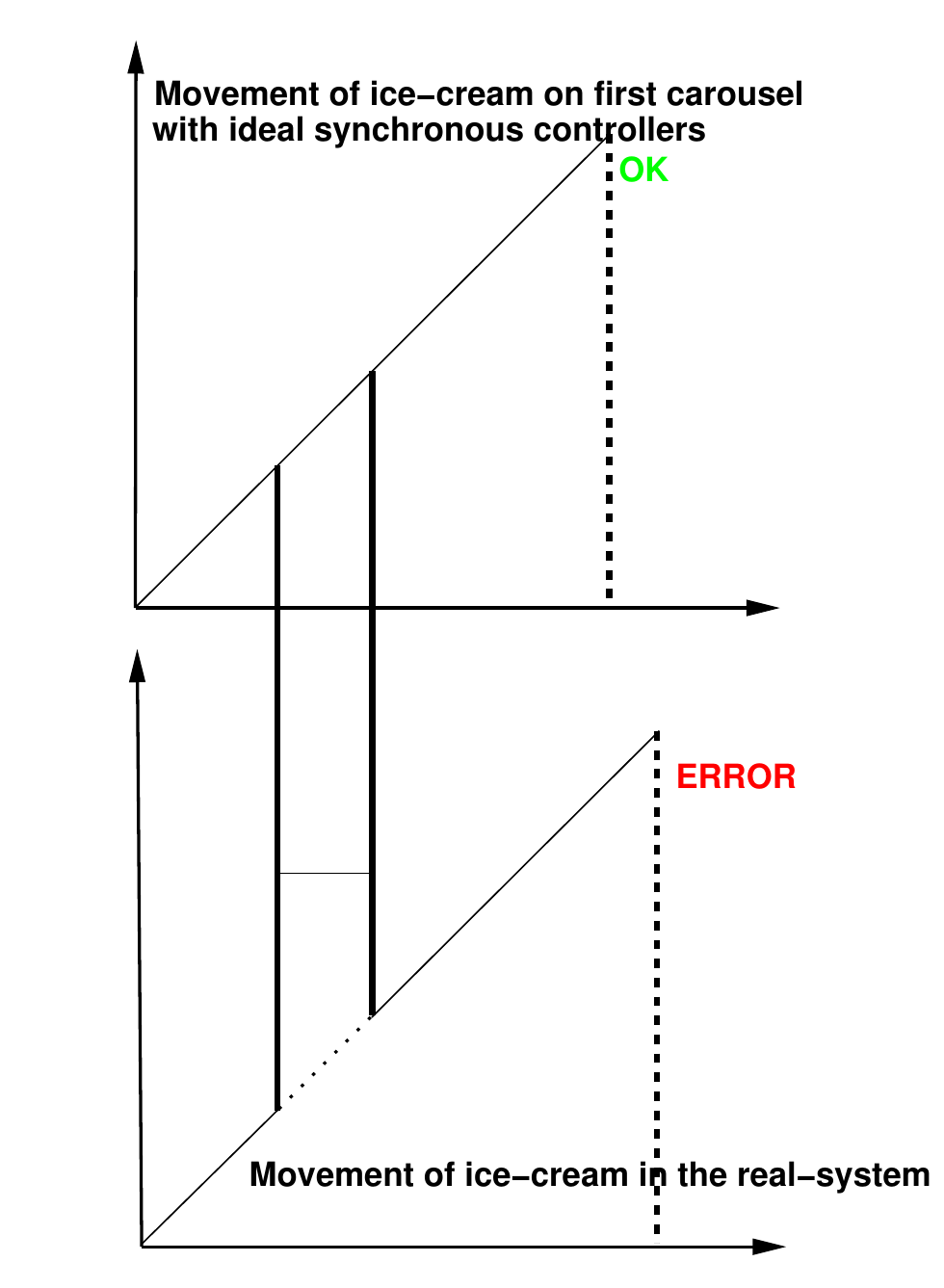_t}} \label{fig:2a}}
  \subfloat[Missed item tag due to WCRT]
  {\scalebox{0.5}{\input{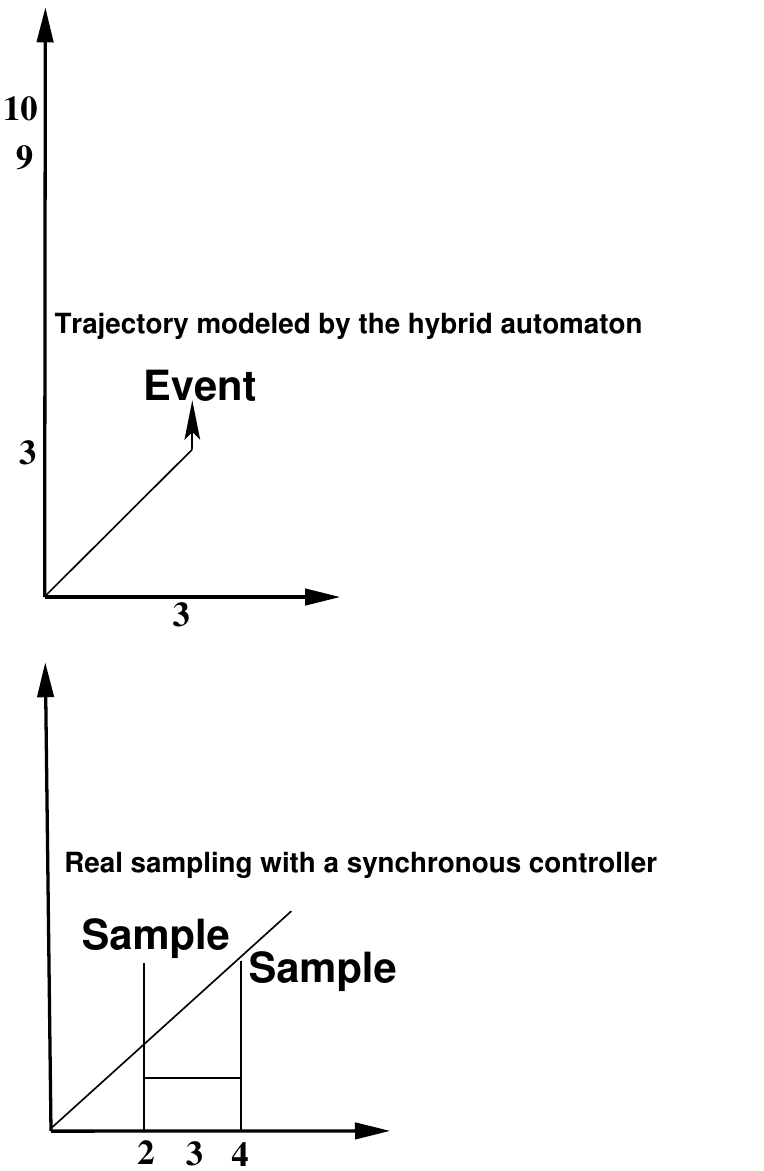_t}}\label{fig:2b}}
  \caption{The different movement of the ice-cream -- hybrid automaton
    model vs. the real system}
  \label{fig:2}
\end{figure}

The movement of the ice-cream in the real system with the \texttt{TRDC}
placed at position 3 (as obtained from the hybrid automaton model) is
shown in Figure~\ref{fig:2a}, bottom graph. Every decision made by the
controller does take some time. In case of synchronous controllers, this
time is the WCRT. Suppose that $WCRT = 2$ units for the \texttt{TRDC}
controller, then the ice-cream is at position 5 when the hybrid
automaton moves to mode \textbf{B}. Now, the system modeled by the
hybrid automaton and the real implementation are not in-sync. In fact,
when the system enters mode \textbf{D}, the invariant $x \leq \beta$
does not hold and the transition is immediately made back to mode
\textbf{A}. But, the ice-cream is already at position 11 when the system
enters mode \textbf{D}, which is past $\beta = 10$ and hence, the
ice-cream now moves to \texttt{Storage2} rather than \texttt{Storage1}
as desired, thereby violating the safety property. Overall, the model
does \textit{not} reflect reality and needs to be modified. One might
\textit{assume} that the transition time of the controller is orders of
magnitude smaller compared to the speed of movement of the ice-cream on
the carousel and hence, can be considered as zero. This is a very rough
\textit{approximation} as indicated
in~\cite{albert2004comparison}. There are data acquisition delays,
sensor delays, communication delays, computation delays in digital
controllers, which cannot be ignored with the slight of hand. These
delays need to be accounted for in the WCRT of the embedded controller.


\subsection{The hybrid automaton, the worst case reaction time and the
  synchrony hypothesis}
\label{sec:hybrid-model-worst-1}

Every synchronous program can be statically analyzed to find its
WCRT. As mentioned before (see Section~\ref{sec:synchr-model-comp}) the
synchrony hypothesis is guaranteed \textrm{iff} the inter-arrival time
of input events is less than or equal to $WCRT$. For the manufacturing
system example $WCRT = 2$, hence, the synchronous control logic
(\texttt{TRDC}), \textit{in the worst case}, samples inputs every 2
units of time. An input is generated when the ice cream reaches position
3 (since $\alpha = 3$), but under the synchrony assumption, this input
is missed as this input event is not aligned with the edge of the
controller clock, i.e., it is not divisible by $WCRT=2$ (see
Figure~\ref{fig:2b}). An event driven system would, on the other hand,
easily capture this input event. Hence, there is an implicit assumption
in the hybrid automaton that the control logic is event driven rather
than clock-driven as is the case with synchronous controllers. This is
yet another problem that needs to be addressed when designing
synchronous controllers.

The aforementioned problems occur due to the non-zero reaction time of
the synchronous controllers. More precisely, the plant makes progress
while the controller carries out internal computations, unlike in the
hybrid automaton where these discrete mode-switches zero time. This
plant behavior could be modeled by labeling the discrete transitions in
the hybrid automaton with differential equations. But, this solution
does not bode well with the semantics of the hybrid automaton. The time
for the discrete transition depends upon the implementation of the
controller, which differs depending upon the underlying platform,
compiler technology, etc. Hence, if we were to simply label the discrete
transitions with differential equations, the evolution of the continuous
plant variables would depend upon the speed of the controller, which is
in stark contrast to the semantics of the hybrid
automaton~\cite{Henzinger:1996:THA:788018.788803}. In light of these
problems we need a new programming model for design and verification of
hybrid systems. In the rest of the paper we present a power language
called \textit{HySysJ} that: (1) results in high fidelity hybrid system
models, by incorporating time-delayed mode switching, (2) allows
automatically extracting controllers for implementation from the hybrid
model and (3) allows for automatic formal verification of the hybrid
system.



\section{The base language}
\label{sec:base-language}

The proposed language HySysJ builds atop the synchronous subset of the
SystemJ~\cite{amal10} programming language, which is itself inspired
from Esterel~\cite{berry96}. The core kernel statements of the language
are given in Figure~\ref{fig:5}.

\begin{figure}[t!]
  \centering
   \[
   \begin{array}{ll}
     stmt & ::=  stmt_0 [\mathrm{``||"} stmt] \\
     stmt_0 & ::= stmt_1 [``;" stmt_0] \\
     stmt_1 & ::= \\
     & | ``\mathbf{nothing}" | ``\mathbf{emit}"\ a | ``?"a ``=" expr \\
     & | ``\mathbf{pause}" | ``\mathbf{abort}" ``("
     [``\mathbf{immediate}"] expr ``)" stmt \\
     & | ``\mathbf{if}" ``(" expr ``)" stmt ``\mathbf{else}"
     stmt \\
     & | ``\mathbf{suspend}" ``("
     [``\mathbf{immediate}"] expr ``)" stmt \\
     & | [``\mathbf{input}" | ``\mathbf{output}" ] [type] ``\mathbf{signal}" a\ [op\ ``="
     expr] \\
     & | ``\mathbf{loop}" stmt
     | ``\{" stmt ``\}" \\
     op & ::= ``\mathbf{op+}" | ``\mathbf{op*}"
   \end{array}
   \]
   \caption{The core kernel statements of the synchronous subset of
     SystemJ. The terminals appear within double quotes, and angular
     brackets indicate optional syntactic components.}
  \label{fig:5}
\end{figure}

The core synchronous language constructs in SystemJ are borrowed
directly from Esterel. The $\mathbf{nothing}$ construct terminates
instantaneously and is primarily used in the structural operational
semantics during term rewriting. Every signal is declared via the
$\mathbf{signal}$ declaration statement. The $type$ declaration for a
signal is optional. A non-typed signal is considered to be a pure signal
whose \texttt{status} can be set to $true$ for one logical tick by
emitting it (via $\mathbf{emit}$) and is $false$ if it is not emitted in
that logical tick. A \textit{valued} signal has a value and a
status. Every valued signal is uniquely associated with one of the
types: $\mathbf{ratio}$, $\mathbf{integer}$, or $\mathbf{boolean}$. A
signal can be emitted multiple times with different values in the same
logical tick. In such cases, signal values are combined with operators
defined during signal declaration. Only associative and commutative
operators (e.g., $\mathbf{op+}$ and $\mathbf{op*}$) are permitted over
signal values and everything must be well-typed in the expected
way. Unlike the status of a signal, the value of a signal is persistent
over logical ticks. A block of statements can be preempted or suspended
for a single tick using $\mathbf{abort}$ and $\mathbf{suspend}$
constructs, respectively. The $\mathbf{if}$ construct is the usual
branching construct, operating on the status or values of
signals. Moreover, one or more of the aforementioned statements can be
run in lockstep parallel using the synchronous parallel operator $||$.
Finally, the $\mathbf{loop}$ construct is used to write temporal loops,
whereby each iteration consumes a logical tick via the $\mathbf{pause}$
construct.

The synchronous semantics of SystemJ \textit{differs} from Esterel in
one significant way: the emission of every signal is \textit{delayed by
  a single logical tick} and is only visible in the next iteration of
the synchronous program. We describe these so called \textit{delayed
  signal semantics} using simple code snippets shown in
Figure~\ref{fig:8}.

\newbox{\csf}
\begin{lrbox}{\csf}
  \begin{lstlisting}[style=sysj,morekeywords={signal,loop,abort,await,emit,present,trap,pause,exit,delay,suspend}]
    signal S, A, B;
    emit S; 
    if (S) emit A else emit B;
    pause
  \end{lstlisting}
\end{lrbox}

\newbox{\css}
\begin{lrbox}{\css}
  \begin{lstlisting}[style=sysj,morekeywords={signal,loop,abort,await,emit,present,trap,pause,exit,delay,suspend}]
    signal S, A, B;
    emit S; 
    pause; //additional pause 
    if (S) emit A else emit B;
    pause
  \end{lstlisting}
\end{lrbox}

\begin{figure}[t!]
  \centering
  \subfloat[Synchronous program code snippet 1] {
    \usebox\csf
    \label{fig:8a}
  }
  \hspace{30pt}
  \subfloat[SystemJ logical timing behavior] {
    \includegraphics[scale=0.5]{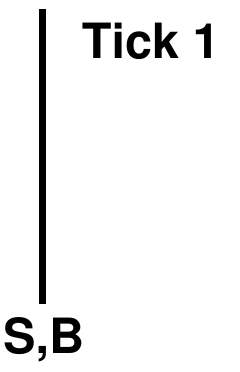}
    \label{fig:8c}
  }

  \subfloat[Synchronous program code snippet 2] {
    \usebox\css
    \label{fig:8d}
  }
  \hspace{30pt}
  \subfloat[SystemJ logical timing behavior] {
    \includegraphics[scale=0.5]{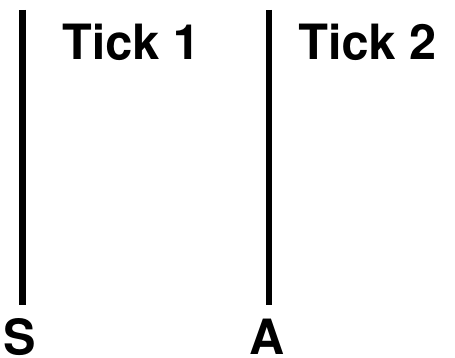}
    \label{fig:8f}
  }
  \caption{SystemJ vs. Esterel logical timing behavior}
  \label{fig:8}
\end{figure}

Figure~\ref{fig:8a} shows a very simple synchronous program. Three pure
signals \texttt{S}, \texttt{A}, and \texttt{B} are declared. Signal
\texttt{S} is emitted and then its status is checked for presence, if
this signal has been emitted, then signal \texttt{A} is emitted, else
signal \texttt{B} is emitted. Finally, the program ends with the
$\mathbf{pause}$ statement indicating the end of the logical tick. The
logical timing behavior of this SystemJ program is shown in
Figure~\ref{fig:8c}. In SystemJ, the emission of signal makes its
visible only in the next logical tick, hence, this program emits signal
\texttt{B} in the first logical tick. The logical timing behavior
achieved by slightly changing the program (inserting an additional
$\mathbf{pause}$ construct) is shown in Figures~\ref{fig:8d}
and~\ref{fig:8f}. In this case, since signal \texttt{S} is emitted in
tick-1, its status is \texttt{true} in tick-2 and hence, SystemJ
following the so called \textit{delayed} signal semantics emits signal
\texttt{A} in the second tick.

Valued signals follow rules similar to signal statuses, i.e., reading a
value of the signal (e.g., $?S$) always gives the value from the
previous logical tick or the default value (0 usually), while setting
the value of the signal (e.g., $?S = 2$) always sets the current
value. The previous value of the signal is updated to the current value
at the end of the logical tick.

\newbox{\causalf}
\begin{lrbox}{\causalf}
  \begin{lstlisting}[style=sysj,morekeywords={signal,loop,abort,await,emit,present,trap,pause,exit,delay,suspend}]
    if (S) else emit S
  \end{lstlisting}
\end{lrbox}

\newbox{\causals}
\begin{lrbox}{\causals}
  \begin{lstlisting}[style=sysj,morekeywords={signal,loop,abort,await,emit,present,trap,pause,exit,delay,suspend}]
    ?S = ?S + 1
  \end{lstlisting}
\end{lrbox}

\begin{figure}[t!]
  \centering
    \usebox\causals
    \caption{Code snippet -- incorrect in Esterel, but correct in
      SystemJ}
  \label{fig:11}
\end{figure}

This so called delayed signal semantics implicitly avoid plethora of
problems that plague Esterel programs, related to causality. 
Consider the code snippet in Figure~\ref{fig:11}. In case of Esterel,
the value of signal \texttt{S} is fed back to itself in the same logical
tick, and hence, in Esterel, one needs to check that \mbox{\texttt{?S ==
    (?S + 1)}}, which obviously has no solution. But, in SystemJ, since
the signal values are only ever updated at the end of a logical tick,
the program in Figure~\ref{fig:11} is computable.

\newbox{\flisting}
\begin{lrbox}{\flisting}
  \begin{lstlisting}[style=sysj,morekeywords={signal,loop,abort,await,emit,present,trap,pause,exit,delay,suspend}]
    int signal S op+ = 0;
    signal R;
    abort (R) 
     loop {
       ?S = ?S + 1;
       if (?S == (d-1)) emit R;
       pause;
     }
  \end{lstlisting}
\end{lrbox}

Now that we have described the base language and its syntactic
constructs, we are ready to introduce the continuous elements into the
synchronous subset of SystemJ that will result in the new HySysJ hybrid
system specification language.

\section{HysysJ -- introducing continuous time in synchronous SystemJ}
\label{sec:example-driv-inform}

The most fundamental modification to the synchronous language described
in Section~\ref{sec:base-language} is the introduction of continuous
variables and related actions that manipulate these variables. Following
standard practice, we will first introduce the syntactic extensions and
then describe the semantics.

\newbox{\rgone}
\begin{lrbox}{\rgone}
  \begin{lstlisting}[mathescape,style=sysj,morekeywords={signal,loop,abort,await,emit,present,trap,pause,exit,delay,suspend},escapechar=|]
    signal R;
    abort (R)|\label{lst:12}| 
     loop {|\label{lst:1}|
       a = a + $\rho$ * WCRT; |\label{lst:2}|
       if (!TTL ($[a'=\rho]$, $expr$, $\{a\}$)) emit R; |\label{lst:3}|
       pause|\label{lst:4}|
     }|\label{lst:5}|
  \end{lstlisting}
\end{lrbox}

\newbox{\rgtwo}
\begin{lrbox}{\rgtwo}
  \begin{lstlisting}[mathescape,style=sysj,morekeywords={signal,loop,abort,await,emit,present,trap,pause,exit,delay,suspend},escapechar=|]
    signal R;
    abort (R) 
     loop {
       a = a + $\rho$ * WCRT; b = b + $\sigma$ * WCRT; |\label{lst:13}|
       if (!TTL ($[a'=\rho, b'=\sigma]$, $expr$, $\{a, b\}$)) emit R; 
       pause
     }
  \end{lstlisting}
\end{lrbox}


\begin{figure}[t!]
  \centering
  \subfloat[The syntactic constructs for continuous variable declaration
  and manipulation] {
    \begin{small}
      
    \begin{tabular}{ll}
      $stmt_1$ & $::= stmt_2 $\\
      $stmt_2$ & $::= $ \\
      & $| ``\mathbf{cont}" a\ [op\ ``="\ expr]$ \\
      & $| ``\mathbf{cont}" a\ [``="\ expr]$ \\
      & $|\ a = expr $\\
      & $| ``\mathbf{do} ``\{" stmt_3 ``\}" ``\mathbf{until}" ``(" expr ``)" $\\
      $stmt_3 $ & $::=  $\\
      & $|\ a``'" = expr $\\
      & $|\ a``'" = expr [``||" stmt_3]$ 
    \end{tabular}
    \end{small}
     \label{fig:7a}
  }
  
  \subfloat[The rewrite for the derived construct:
  \mbox{$\mathbf{do}\ \{a' = \rho\} \mathbf{until} (expr)\}$}.] {
    \usebox\rgone \label{fig:7c}}
  \hspace{30pt}
  \subfloat[The rewrite for the derived construct:
  \mbox{$\mathbf{do}\ \{a' = \rho || b' = \sigma \} \mathbf{until}
    (expr)\}$}.] {
    \usebox\rgtwo \label{fig:7b}}
  \caption{The continuous variables and derived construct operating on
    these variables in HySysJ}
  \label{fig:7}
\end{figure}

\subsection{Syntax of continuous actions}
\label{sec:synt-cont-acti}

The syntactic extensions to declare and manipulate the continuous
variables are given in Figure~\ref{fig:7a}. Every continuous variable is
declared with the qualifier $\mathbf{cont}$. A default value can be
specified during declaration. Uninitialized continuous variables take a
default value of 0. Furthermore, a commutative and associative operator
($op$) can be used to combine the values of the continuous variables,
just like in case of valued signals. The type of every continuous
variable is a $\mathbf{ratio}$.

Two forms of syntactic extensions are allowed for manipulating
continuous variables: (1) a direct assignment to the continuous variable
or use of continuous variables in expressions, called
\textit{instantaneous actions} and (2) writing first order ODEs inside a
$\mathbf{do}\ \mathbf{until}\ (expr)$ block that evolve the continuous
variables, called \textit{flow actions}. We use primed symbols (e.g.,
\mbox{\texttt{a$'$ = c}}, where \texttt{c} is some constant) to describe
these first order derivatives. One or more such ODEs can be specified
inside the $\mathbf{do}$ block. The synchronous parallel operator $||$
is used to specify more than one ODE inside the $\mathbf{do}$
block. Every ODE inside the $\mathbf{do}$ block is evaluated
\textit{simultaneously} until the $expr$ (the so called invariant
condition) holds true. The $\mathbf{until}$ $expr$ is required to
evaluate to a Boolean $true$ or $false$ value.

\subsection{Semantics of continuous actions}
\label{sec:semant-cont-acti}


\subsubsection{Instantaneous actions}
\label{sec:inst-acti}

Instantaneous actions are so called, because the statement terminates
instantaneously without consuming a logical tick. Examples of
instantaneous actions are shown in Figure~\ref{fig:15}. These include;
assigning a value to a continuous variable, reading the value of a
continuous variable, assigning the value of the continuous variable to a
valued signal or another continuous variable, etc.

Continuous variables, like signals, follow delayed semantics. Hence,
using a continuous variable in an expression (right hand side in case of
an assignment statement) always gives the value from the previous
logical tick or the default value. A new value is assigned to a
continuous variable only at the end of the current logical tick. 

In Figure~\ref{fig:16a}, continuous variable \texttt{a} is first
declared, with a default value of 0, and then assigned a value of
1. Next, an \texttt{if} \texttt{else} block is used to check the value
of \texttt{a}. If the value of \texttt{a} is 1, then signal \texttt{S1}
is emitted else signal \texttt{S2} is emitted. The same program is
presented in Figure~\ref{fig:16b}, except that a \texttt{pause}
statement is inserted after the assignment statement: \texttt{a = 1}.
In the first case, due to delayed semantics, when the value of
\texttt{a} is read in the \texttt{if} expression, the return value is 0
(the default value) and hence, signal \texttt{S2} is emitted. On the
other hand, in Figure~\ref{fig:16b} when the program flow reaches the
\texttt{if} statement, it is the second logical tick and hence, the
value of \texttt{a} is 1 (assigned in the previous logical tick) thus
signal \texttt{S1} is emitted.

It is important to note that the name \textit{instantaneous action} does
not mean that the value of the continuous variable changes
\textit{instantaneously}. Every continuous variable changes its value
only at the end of the tick. The name instantaneous action \textit{only}
implies that the statement itself is instantaneous and does not consume
logical ticks\footnote{Every statement, except for \texttt{pause} in
  HySysJ is instantaneous}.

\newbox{\clfggg}
\begin{lrbox}{\clfggg}
  \begin{lstlisting}[mathescape,style=sysj,morekeywords={until,cont,signal,loop,abort,await,emit,present,trap,pause,exit,delay,suspend}]
    signal S1, S2; //declaring pure signals
    cont a; //declaring a with default value 0
    a = 1; // assigning value 1 to continuous variable a
    pause;
    if (a == 1) emit S1 //continuous variable used in expression.
    else emit S2;
    pause
  \end{lstlisting}
\end{lrbox}

\newbox{\clfgg}
\begin{lrbox}{\clfgg}
  \begin{lstlisting}[mathescape,style=sysj,morekeywords={until,cont,signal,loop,abort,await,emit,present,trap,pause,exit,delay,suspend}]
    signal S1, S2; //declaring pure signals
    cont a; //declaring a with default value 0
    a = 1; // assigning value 1 to continuous variable a
    if (a == 1) emit S1 //continuous variable used in expression.
    else emit S2;
    pause
  \end{lstlisting}
\end{lrbox}

\newbox{\clfg}
\begin{lrbox}{\clfg}
  \begin{lstlisting}[mathescape,style=sysj,morekeywords={until,cont,signal,loop,abort,await,emit,present,trap,pause,exit,delay,suspend}]
    cont a = 0; //declaring a continuous variable with initial value 0
    a = 1; // assigning value 1 to continuous variable a
    if (a == 1) emit S; //continuous variable used in expression.
    ?S = a; //value of continuous variable a assigned to a valued signal.
  \end{lstlisting}
\end{lrbox}

\begin{figure}[t!]
  \centering
  \usebox\clfg
  \caption{Instantaneous actions on continuous variables in HySysJ}
  \label{fig:15}
\end{figure}

\begin{figure}[t!]
  \centering
  \subfloat[Code snippet 1]{\usebox\clfgg \label{fig:16a}}
  
  \subfloat[Code snippet 2]{\usebox\clfggg \label{fig:16b}}
  \caption{Instantaneous actions on continuous variables in HySysJ with
    delayed semantics}
  \label{fig:16}
\end{figure}

\subsubsection{Flow actions}
\label{sec:flow-actions}

The flow actions are programmed using $\mathbf{do}$ $\mathbf{until}$
blocks and are first order ODEs with a constant rate of change. In the
example in Figure~\ref{fig:12a}, continuous variable \texttt{a} is
declared and initialized to a value of 0, which is an instantaneous
action. Next, this variable evolves continuously until its value is 2
inside a $\mathbf{do}$ $\mathbf{until}$ block. In the next example, two
variables; \texttt{a} and \texttt{b} evolve together until the
\textit{invariant condition} ($\mathbf{until}$ expression) holds. One
can also combine multiple such flow actions together in synchronous
parallel (Figure~\ref{fig:12d}). 
Finally, HySysJ also allows preempting flow actions using the standard
preemptive constructs from the base language.

\newbox{\clf}
\begin{lrbox}{\clf}
  \begin{lstlisting}[mathescape,style=sysj,morekeywords={until,cont,signal,loop,abort,await,emit,present,trap,pause,exit,delay,suspend}]
    cont a = 0;
    do {a$'$ = 1} until (a <= 2) 
  \end{lstlisting}
\end{lrbox}

\newbox{\cls}
\begin{lrbox}{\cls}
  \begin{lstlisting}[mathescape,style=sysj,morekeywords={until,cont,signal,loop,abort,await,emit,present,trap,pause,exit,delay,suspend}]
    cont a = 0, b = 0; 
    do {a$'$ = 2 || b$'$ = 2} until (a <= 16 && b <= 10)
  \end{lstlisting}
\end{lrbox}

\newbox{\clfi}
\begin{lrbox}{\clfi}
  \begin{lstlisting}[mathescape,style=sysj,morekeywords={until,cont,signal,loop,abort,await,emit,present,trap,pause,exit,delay,suspend}]
    cont a = 0, b = 0;
    do {a$'$ = 1 || b$'$ = 1} until (a <= 10 && b <= 6)
  \end{lstlisting}
\end{lrbox}

\newbox{\clt}
\begin{lrbox}{\clt}
  \begin{lstlisting}[mathescape,style=sysj,morekeywords={until,cont,signal,loop,abort,await,emit,present,trap,pause,exit,delay,suspend}]
    cont a = 0, b = 0; 
    do {a$'$ = 1} until (a <= 10) || do {b$'$ = 1} until (b <= 6)
  \end{lstlisting}
\end{lrbox}

\newbox{\clff}
\begin{lrbox}{\clff}
  \begin{lstlisting}[mathescape,style=sysj,morekeywords={until,cont,signal,loop,abort,await,emit,present,trap,pause,exit,delay,suspend}]
    signal S;
    cont a = 0; 
    abort(S) { do {a$'$ = 1} until (true) } || {pause; emit S; pause}
  \end{lstlisting}
\end{lrbox}

\begin{figure}[t!]
  \centering
  \subfloat[Example with one continuous variable]{\usebox\clf \label{fig:12a}}
  
  \subfloat[Example with two continuous variables]{\usebox\cls \label{fig:12b}}

  \subfloat[Another example of two continuous variables]{\usebox\clfi \label{fig:12c}}

  \subfloat[Example of parallel composition of flow actions]{\usebox\clt \label{fig:12d}}

  \subfloat[Example of preemption of continuous variable
  evolution]{\usebox\clff \label{fig:12e}}
  \caption{Examples of continuous actions in HySysJ}
  \label{fig:12}
\end{figure}

\paragraph{\textit{Semantics and intuitive explanation for simple flow actions}}
\label{sec:flow-actions-1}

In this section we describe the rewrite semantics of simple flow actions
and give the intuitive explanation for these rewrites. A complete formal
treatment is provided in Appendix~\ref{sec:disc-rewr-semant}. 

Consider the simple flow action in Figure~\ref{fig:12a}; variable
\texttt{a} evolves linearly with time until it reaches the value 2. The
first order ODE in Figure~\ref{fig:12a} has the solution:
\mbox{$a = \int 1 \times \mathrm{d}t = 1 \times t + C$}, where 1 is the
rate of change of \texttt{a} and $C$ is the initial value of
\texttt{a}. Furthermore, the $\mathbf{until}$ expression gives the upper
bound on this indefinite integral. For this very simple flow action, the
upper bound of the indefinite integral is 2. Hence, the value of
\texttt{a} is: $a = [t]^2_0 + C = 2 + C$. From Figure~\ref{fig:12a}, we
also know that the initial value of \texttt{a} is 0, i.e.,
$ C = 0, \therefore a = 2$.

\begin{eqnarray}
  \label{eq:7}
  \int_0^k \rho \times \mathrm{d}t + C = \sum_{n=0}^{k-1}\ 
  \rho \times \Delta t + a[0] \\
  \label{eq:8}
  \therefore
  a[k] = a[0] + \sum_{n=0}^{k-1}\ \rho \times WCRT
\end{eqnarray}

The \textbf{main idea} of our rewrite is to approximate the continuous
evolution of \texttt{a} using a discrete time
model. Equation~(\ref{eq:7}) gives this approximation. We take advantage
of the synchronous nature of our programming language. Every HySysJ
program proceeds in discrete logical ticks, the value of variable
\texttt{a} at tick 0 (the initial value) is denoted $a[0]$. Similarly,
for some tick $n$, the value is denoted by $a[n]$. In
Equation~(\ref{eq:7}), $\Delta t$ is the time between two discrete
logical ticks, which is the WCRT as stated in
Section~\ref{sec:synchr-model-comp} and can be computed statically for
any HySysJ program~\cite{wilhelm08}. $\rho$ is the rate of change, which
is always a constant for any linear ODE. Finally, the upper bound of the
summation ($k-1$) is dependent upon the $\mathbf{until}$ expression.

Equation~(\ref{eq:8}) obtained from Equation~(\ref{eq:7}) is clearly a
bounded reduction on \texttt{a} (using sum), which in any imperative
language is written using a bounded loop. Hence, our rewrite for any
linear ODE is a bounded temporal loop computing the value of the
continuous variable as shown in Figure~\ref{fig:7c}. In
Figure~\ref{fig:7c} the temporal loop performing reduction on \texttt{a}
spans from lines~\ref{lst:1} to~\ref{lst:5}. The actual reduction is
performed on line~\ref{lst:2}. This temporal loop is exited using a
combination of $\mathbf{emit}$ and $\mathbf{abort}$ constructs as shown
on lines~\ref{lst:12} and~\ref{lst:7}. Finally, the upper bound $k$ is
Equation~(\ref{eq:8}) is computed dynamically in the rewrite using
procedure \texttt{TTL} (\textit{Time To Live}) shown in
Algorithm~\ref{alg:1}. In the general case it is impossible to
statically (at compile time) compute the upper bound $k$ in
Equation~(\ref{eq:8}), since one can have complex invariant conditions
specified in the $until$ expressions and hence, dynamically (at program
execution time) deciding when to abort the temporal loop is the only
viable option.

Delayed semantics play a crucial role in the rewrite of
Figure~\ref{fig:7c}.
\begin{compactitem}
\item Computability of the reduction: Reading the value of \texttt{a}
  (line~\ref{lst:3}) always gives the value from the previous
  tick. Writing to \texttt{a} succeeds only at the end of the logical
  tick, i.e., when the control flow reaches the $\mathbf{pause}$
  construct on line~\ref{lst:4}. The delayed semantics make the
  reduction computable. Moreover, the updated value of \texttt{a} is
  stable and observable only at the end of the tick following delayed
  semantics.
\item The \texttt{TTL} algorithm: The \texttt{TTL} algorithm, which
  decides when to abort the infinitely running temporal loop is also
  dependent upon the delayed semantics. The $\mathbf{abort}$ construct
  (line~\ref{lst:12}) checks if the status of signal \texttt{R} is set
  to \texttt{true} in the \textit{previous} logical tick (statuses of
  signals are \texttt{false} upon declaration), following delayed
  semantics, and if so, aborts the loop performing the reduction. Signal
  \texttt{R} is emitted inside the loop body (line~\ref{lst:3}),
  provided the Boolean value returned from the \texttt{TTL} algorithm in
  \textbf{not} \texttt{true}. The status of \texttt{R} is updated only
  at the end of the tick (line~\ref{lst:4}, which is also completion of
  iteration of the loop). Hence, we are guaranteed that at least one
  iteration of the temporal loop will take place, irrespective of the
  invariant in the $\mathbf{until}$ expression.
  
  From Equation~(\ref{eq:8}), we know that $a[k]$ for some tick $k$
  satisfies the $\mathbf{until}$ invariant. Obviously, $a[k-1]$ should
  also satisfy this invariant condition. But, $a[k+1]$ should never
  satisfy the $\mathbf{until}$ invariant. Due to delayed semantics, we
  now know that given $a[k-1]$, $a[k]$ will always be computed. In order
  not to reach tick $k+1$ (since $a[k+1]$ violates the $\mathbf{until}$
  invariant) signal \texttt{R} should have its status set to
  \texttt{true} at the end of tick $k$. Hence, the signal \texttt{R}
  (line~\ref{lst:3}) should be emitted in the program
  \textit{transition} from tick $k-1$ to $k$ (denoted as
  $[k-1,k)$). But, during this program transition, we only know the
  value $a[k-1]$, which consequently means that algorithm \texttt{TTL}
  needs to look ahead 2 ticks ($k+1-(k-1)=2$) and return a Boolean value
  \texttt{true} if it satisfies the $\mathbf{until}$ invariant and
  \texttt{false} otherwise. If invariant condition is satisfied 2 ticks
  from now, then one more iteration of the loop is allowed to be carried
  out, else the loop terminates at the end of the current program
  transition.
\end{compactitem}

\begin{algorithm}[t!]
  \begin{minipage}{1.0\linewidth}
    \SetKwInOut{Input}{Input}
    \Input{$\Omega$: a list of ODEs from one $\mathbf{do}$ block}
    \Input{$expr$: the $\mathbf{until}$ expression}
    \Input{$\mathcal{V}$: the set of continuous variables in $\Omega$}
    \KwResult{a Boolean value}
    $\Delta$ $\leftarrow$ $\emptyset$\;
    \For {each $v$ in $ \mathcal{V}$} {
      $\tau$ $\leftarrow$ $[\![v]\!]$ + $2$ * get\_rho (filter
      ($\Omega$, $v$)) * WCRT~\footnote{$[\![v]\!]$ is the current
        value of the continuous variable $v$.}\footnote{Function
        get\_rho returns the rate of change for the continuous variable
        $v$.}\label{alg:1:l1}\;
      \tcp{Union the value $\tau$ of $v$ two ticks from now in set $\Delta$}
      $\Delta$ $\leftarrow$ $\Delta \cup \{v \rightarrow \tau \}$\;
    }
    \Return holds\_at\_delta ($expr$, $\Delta$)\;
    \caption{Algorithm to calculate TTL}
    \label{alg:1}
  \end{minipage}
\end{algorithm}

\newbox{\rfirst}
\begin{lrbox}{\rfirst}
  \begin{lstlisting}[mathescape,style=sysj,morekeywords={until,wait_inbetween,cont,signal,loop,abort,await,emit,present,trap,pause,exit,delay,suspend},escapechar=|]
    cont a = 0;
    signal R;
    abort (R) 
     loop {|\label{lst:6}|
       a = a + WCRT;|\label{lst:7}|
       if (!TTL ([$a'=1$], a<=2, $\{a\}$)) |\label{lst:8}|
        emit R;|\label{lst:9}|
       pause|\label{lst:10}|
     }|\label{lst:11}|
  \end{lstlisting}
\end{lrbox}

\newbox{\rsecond}
\begin{lrbox}{\rsecond}
  \begin{lstlisting}[mathescape,style=sysj,morekeywords={until,wait_inbetween,cont,signal,loop,abort,await,emit,present,trap,pause,exit,delay,suspend}]
    cont a = 0, b = 0;
    signal R;
    abort (R) 
    loop {
      a = a + (2 * WCRT);
      b = b + (2 * WCRT);
      if (!TTL ([$a'=2$, $b'=2$], 
         a<=16 && b<=10, $\{a,b\}$)) 
      emit R;
      pause
    }
  \end{lstlisting}
\end{lrbox}

\newbox{\rthird}
\begin{lrbox}{\rthird}
  \begin{lstlisting}[mathescape,style=sysj,morekeywords={until,wait_inbetween,cont,signal,loop,abort,await,emit,present,trap,pause,exit,delay,suspend}]
    cont a = 0, b = 0;
    signal R;
    abort (R) 
    loop {
      b = b + WCRT;
      a = a + WCRT;
      if (!TTL ([$a'=1$, $b'=1$], 
         a<=10 && b<=6, $\{a,b\}$)) 
       emit R;
      pause
    }
  \end{lstlisting}
\end{lrbox}

\newbox{\rfourth}
\begin{lrbox}{\rfourth}
  \begin{lstlisting}[mathescape,style=sysj,morekeywords={until,cont,
      wait_inbetween,signal,loop,abort,await,emit,present,trap,pause,exit,delay,suspend}]
    cont a = 0, b = 0; 
    {
      signal R;
      abort (R) 
       loop {
        a = a + WCRT;
        if (!TTL ([$a'=1$], 
           a<=10, $\{a\}$)) 
         emit R;
        pause;
      }
    } || {
      signal R;
      abort (R) 
       loop {
        b = b + WCRT;
        if (!TTL ([$b'=1$], 
           b<=6, $\{b\}$))
         emit R;
        pause;
      }
    }
  \end{lstlisting}
\end{lrbox}

\newbox{\rfifth}
\begin{lrbox}{\rfifth}
  \begin{lstlisting}[mathescape,style=sysj,morekeywords={until,wait_inbetween,
      cont,signal,loop,abort,await,emit,present,trap,pause,exit,delay,suspend}]
    signal S;
    cont a = 0;
    abort (S) {
      loop {
       a = a+WCRT; 
      loop pause 
     }
    }||{pause; emit S; pause}
  \end{lstlisting}
\end{lrbox}

\begin{figure}[t!]
  \centering
  \subfloat[Rewrite for flow action in Figure~\ref{fig:12a}]{
    \usebox\rfirst
    \label{fig:14a}}
  \hspace{10pt}
  \subfloat[The timing diagram for Figure~\ref{fig:14a}]{
    \includegraphics[scale=0.35]{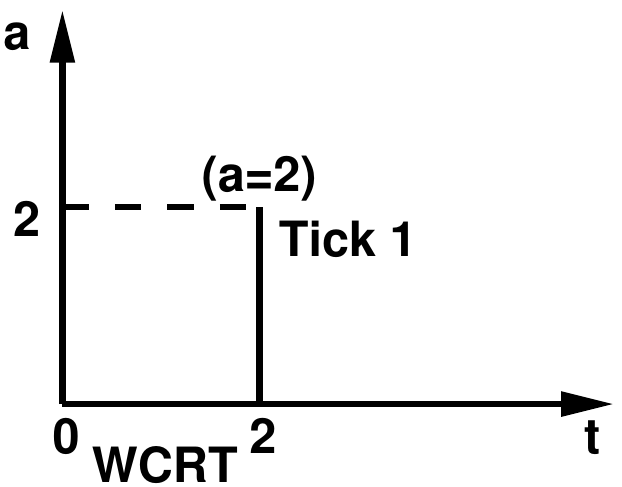}
    \label{fig:14b}}

  \subfloat[Rewrite for flow action in Figure~\ref{fig:12b}]{
    \usebox\rsecond
    \label{fig:14c}}
  \hspace{10pt}
  \subfloat[The timing diagram for Figure~\ref{fig:14b}]{
    \includegraphics[scale=0.35]{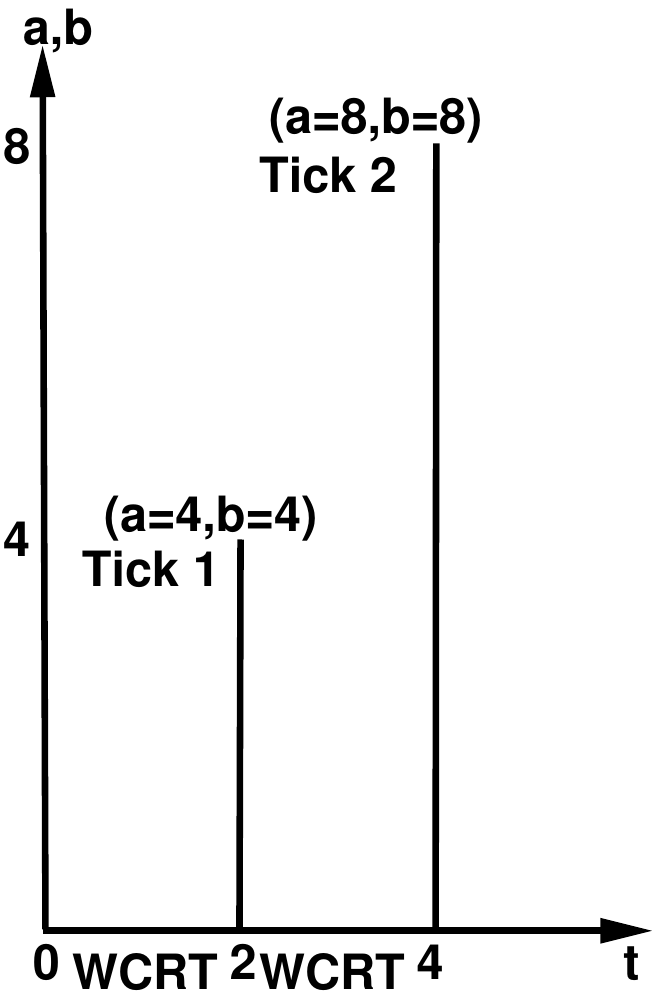}
    \label{fig:14d}}
  
  \caption{Rewrites for the flow actions in Figure~\ref{fig:12}
    with WCRT = 2}
  \label{fig:14}
\end{figure}

\begin{figure}[t!]
  \ContinuedFloat
  \centering
  \subfloat[Rewrite for flow action in Figure~\ref{fig:12c}]{
    \usebox\rthird
    \label{fig:14e}}
  \hspace{10pt}
  \subfloat[The timing diagram for Figure~\ref{fig:14e}]{
    \includegraphics[scale=0.35]{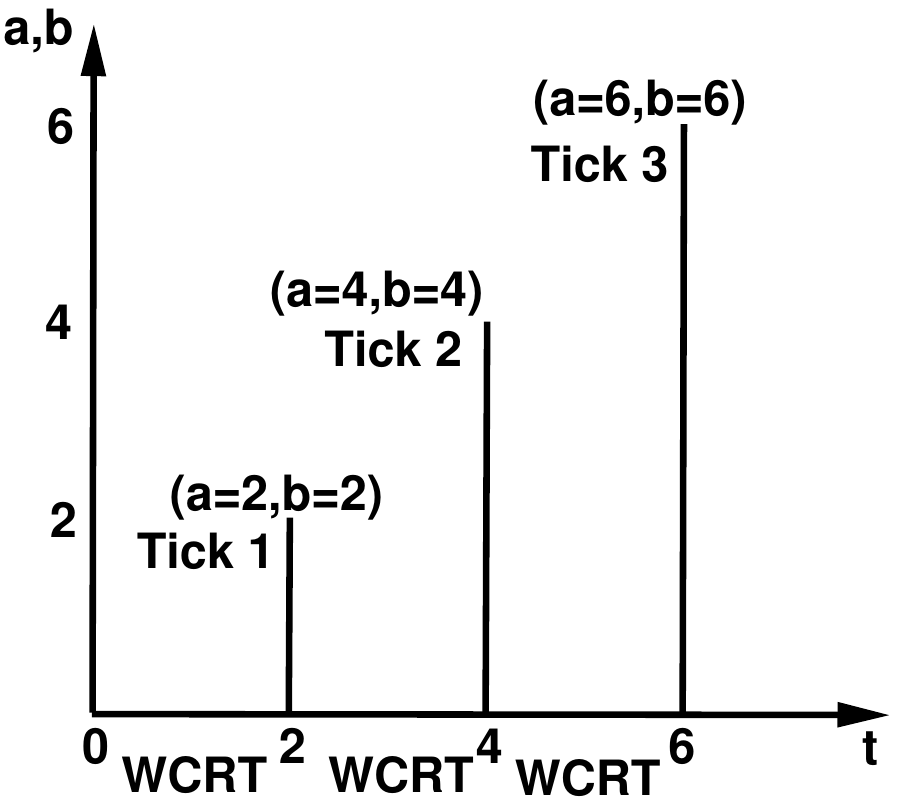}
    \label{fig:14f}}

  \subfloat[Rewrite for flow action in Figure~\ref{fig:12d}]{
    \usebox\rfourth
    \label{fig:14g}}
  \hspace{10pt}
  \subfloat[The timing diagram for Figure~\ref{fig:14g}]{
    \includegraphics[scale=0.25]{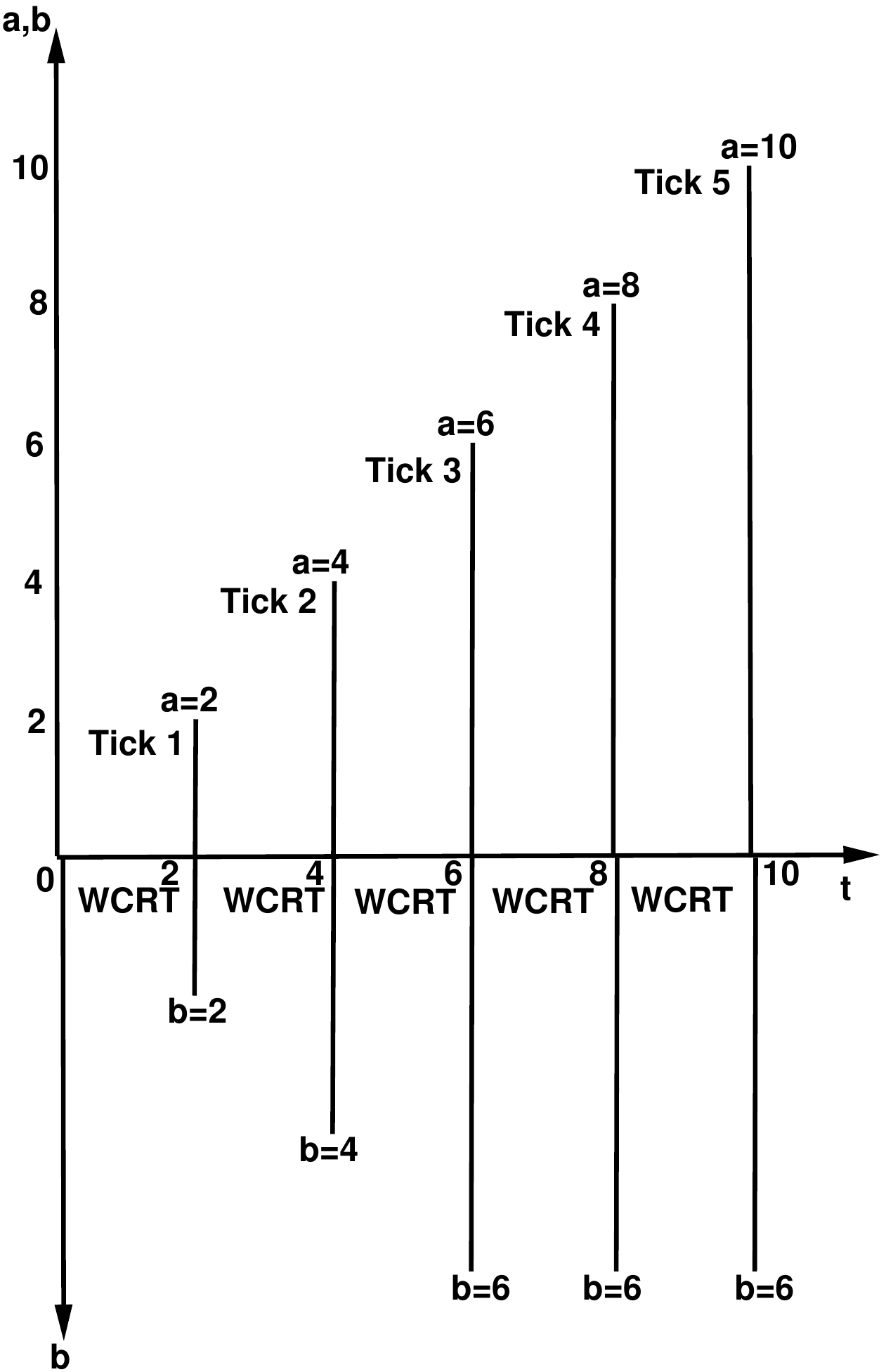}
    \label{fig:14h}}

  \caption{Rewrites for the flow actions in Figure~\ref{fig:12}
    with WCRT = 2}
  \label{fig:14}
\end{figure}

\begin{figure}[t!]
  \ContinuedFloat
  \centering
  \subfloat[Rewrite for flow action in Figure~\ref{fig:12e}]{
    \usebox\rfifth
    \label{fig:14i}}
  \hspace{20pt}
  \subfloat[The timing diagram for Figure~\ref{fig:14i}]{
    \includegraphics[scale=0.4]{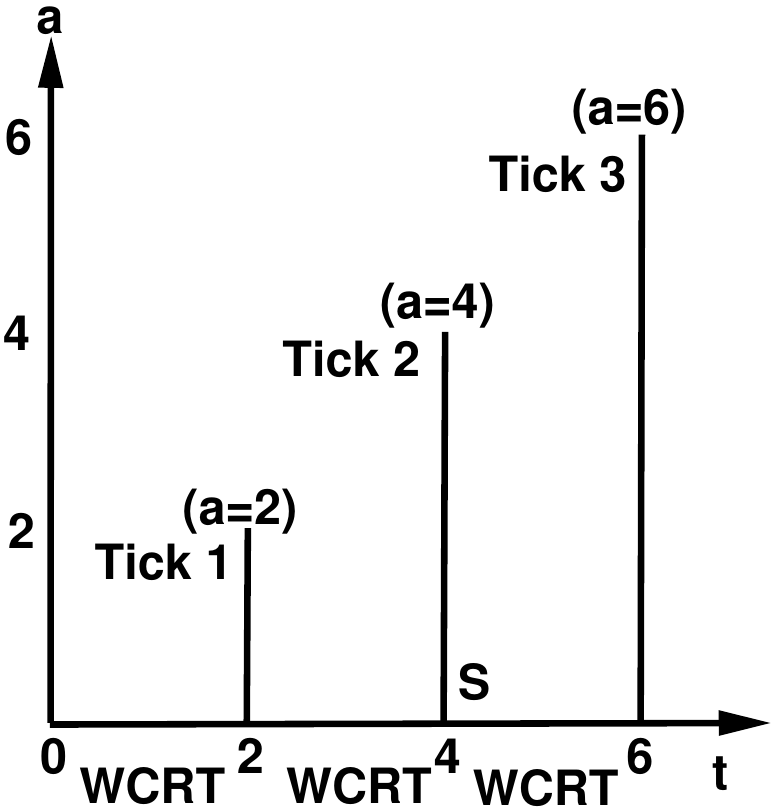}
    \label{fig:14j}}
  \caption{Rewrites for the flow actions in Figure~\ref{fig:12}
    with WCRT = 2}
  \label{fig:14}
\end{figure}

We use the example in Figure~\ref{fig:12a} and its rewrite in
Figure~\ref{fig:14a} to explain how the required \texttt{TTL} algorithm
behavior is achieved. In the rest of the paper we assume for sake of
understanding that the statically computed WCRT value of every HySysJ
program is 2 units. The \texttt{TTL} algorithm (Algorithm~\ref{alg:1})
takes 3 inputs: (1) a list of ODEs ($\Omega$) within one $\mathbf{do}$
block, (2) the $\mathbf{until}$ invariant ($expr$), and (3) the set of
continuous variables evolving in the $\mathbf{do}$ block
($\mathcal{V}$). For our running example these inputs are shown in
Figure~\ref{fig:14a}, line~\ref{lst:8}. Algorithm~\ref{alg:1} computes
for each continuous variable from the set $\mathcal{V}$ its value two
ticks from now (Algorithm~\ref{alg:1}, line~\ref{alg:1:l1}) and places
it into a set $\Delta$. Finally, \texttt{TTL} checks if the values in
set $\Delta$ satisfy the invariant conditions. In the running example,
\texttt{TTL}, when called on the program transition $[0,1)$ obtains the
$[\![a]\!]=0$, the current value of \texttt{a} as 0 (since \texttt{a} is
initialized to 0). The \texttt{filter} function (Algorithm~\ref{alg:1},
line~\ref{alg:1:l1}) first gets the ODE corresponding to variable
\texttt{a}, in this case $a' = 1$. Next, the \texttt{get\_rho} function
gets the rate of change from the ODE, which is simply 1 in this
case. Thus, the computed $\tau$ value is $\tau = 2 * 1 * 2 = 4$
(assuming WCRT = 2). Thus, $a[2]=4$ does not satisfy the invariant
$a\leq 2$ and hence, signal \texttt{R} is emitted in the transition
$[0,1)$ itself. The resultant timing behavior of the rewrite in
Figure~\ref{fig:14a} is shown in Figure~\ref{fig:14b}.

\paragraph{\textit{Semantics of complex flow actions}}
\label{sec:semant-compl-flow}

Multiple continuous variables evolving together can also be handled by
the rewrites. The general rewrite for flow actions evolving multiple
variables in shown in Figure~\ref{fig:7b}. The basic idea of bounded
reduction remains the same. The only difference is that each evolving
variable is reduced sequentially one after the other (line-\ref{lst:13},
Figure~\ref{fig:7b}).

Take for example, the flow action depicted in Figure~\ref{fig:12b}. This
flow action is read as follows: continuous variables \texttt{a} and
\texttt{b} should evolve \textit{simultaneously} (hence, the $||$
composition inside the $\mathbf{do}$ block) until the invariant
condition holds true.  The rewrite and the timing behavior for this
HySysJ program is shown in Figure~\ref{fig:14c} and
Figure~\ref{fig:14d}, respectively. Variables \texttt{a} and \texttt{b},
both evolve at twice the speed of the clock. From the $\mathbf{until}$
expression it is clear that \texttt{a} reaches the value of 16 when
$t = 8$, but \texttt{b} reaches the value of 10 at $t=5$. Furthermore,
given that WCRT = 2, the set
$\Delta = \{a \rightarrow 8, b \rightarrow 8\}$ during the program
transition $[0,2)$, but $\{a \rightarrow 12, b \rightarrow 12\}$ in the
program transition $[2,4)$, which does not satisfy the flow action
invariant, in turn emitting signal \texttt{R}, and hence, the program
terminates at tick 2.

Next, we contrast the flow actions in Figures~\ref{fig:12c}
and~\ref{fig:12d} to show the difference between the synchronous
composition within the $\mathbf{do}$ block and the synchronous
composition of two $\mathbf{do}$ $\mathbf{until}$ blocks. Both these
code snippets have the same ODE expressions. In both these cases,
variables \texttt{a} and \texttt{b} evolve linearly and
\textit{simultaneously}. However, a single invariant condition
constraints the evolution of variables \texttt{a} and \texttt{b} in
Figure~\ref{fig:12c}, while different invariant conditions constraint
the evolution of \texttt{a} and \texttt{b} in Figure~\ref{fig:12d}. The
loop in Figure~\ref{fig:14e} (the rewrite for Figure~\ref{fig:12c}) gets
preempted, in turn terminating the program, at the third logical
tick. In case of Figure~\ref{fig:14g}, only the second synchronous
parallel reaction gets terminated at the third logical tick, but the
whole program cannot terminate due to lockstep semantics of the $||$
operator. Hence, the program terminates only when the loop in the first
synchronous parallel reaction terminates: at tick 5. Variable \texttt{b}
stops evolving at the end of the third tick, whereas \texttt{a} evolves
until the end of the program and takes the value 10.

Until now we have only looked at examples where the evolution of
continuous variables is constrained by invariant conditions akin to the
hybrid automaton. Now, we look at an example where evolution of a
continuous variable is interrupted by a preemption construct. Consider
the code snippet in Figure~\ref{fig:12e}; the flow action states that
\texttt{a} should evolve linearly with the driving clock
\textit{forever}. This continuous action is encapsulated inside an
$\mathbf{abort}$ construct~\footnote{We allow continuous actions to be
  encapsulated in any base language construct.} that preempts the
evolution of \texttt{a} when signal \texttt{S} is present. Signal
\texttt{S} is emitted from a synchronous parallel reaction in the second
logical tick. The rewrite for this code snippet is shown in
Figure~\ref{fig:14i} along with its timing behavior in
Figure~\ref{fig:14j}. The $\mathbf{until(true)}$ invariant condition is
converted into a simple \mbox{$\mathbf{loop}$ $\textbf{pause}$} blocking
condition. The rest of the program remains the same. Signal \texttt{S}
is emitted in the second tick, and responded to by the $\mathbf{abort}$
construct in the third tick, due to delayed signal semantics. The final
observable value of \texttt{a} is 6 when the program terminates. This
preemption based termination of continuous actions will be an important
component of modeling time-delayed mode-switches.

\subsubsection{Schizophrenia}
\label{sec:schizophrenia}

Encapsulating flow actions within preemption statements instantaneously
brings forth the question of schizophrenia -- the possibility that a
single continuous variable can take different values in the same logical
tick.

\newbox{\schtwo}
\begin{lrbox}{\schtwo}
  \begin{lstlisting}[mathescape,style=sysj,morekeywords={until,wait_inbetween,
      cont,signal,loop,abort,await,emit,present,trap,pause,exit,delay,suspend}]
    cont a op+ = 1;
    input signal FAULT;
    loop {
      abort (FAULT) {
        int signal S op+ = 0;
        signal R;
        abort (R) 
        loop {
          ?S = ?S + 1;
          a = a + WCRT;
          // value of d = 2
          if (?S == (2-1)) emit R;
          pause;
        }
      }
      a = 1; //resetting a to 1
    }
  \end{lstlisting}
\end{lrbox}

\newbox{\schone}
\begin{lrbox}{\schone}
  \begin{lstlisting}[mathescape,style=sysj,morekeywords={until,wait_inbetween,
      cont,input,signal,loop,abort,await,emit,present,trap,pause,exit,delay,suspend}]
    cont a op+ = 1;
    input signal FAULT;
    loop {
      abort (FAULT) {
        do {a$'$ = 1} until (a <= 5)
      }
      a = 1; //resetting a to 1
    }
  \end{lstlisting}
\end{lrbox}

\begin{figure}[t!]
  \centering
  \subfloat[Example of schizophrenic code snippet]{
    \usebox\schone
    \label{fig:17a}}
  \hspace{20pt}
  \subfloat[The timing diagram for Figure~\ref{fig:17a}. \texttt{FAULT}
  occurs in the first logical tick.]{
    \includegraphics[scale=0.35]{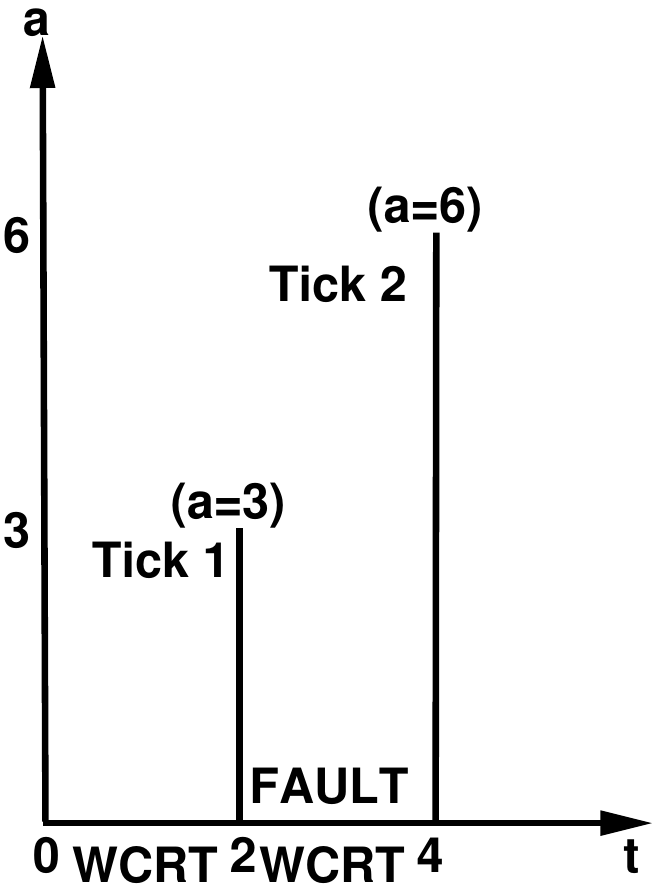}
    \label{fig:17c}}
  
  \subfloat[The timing diagram for Figure~\ref{fig:17a}.
  \texttt{FAULT} signal never occurs]{
    \includegraphics[scale=0.35]{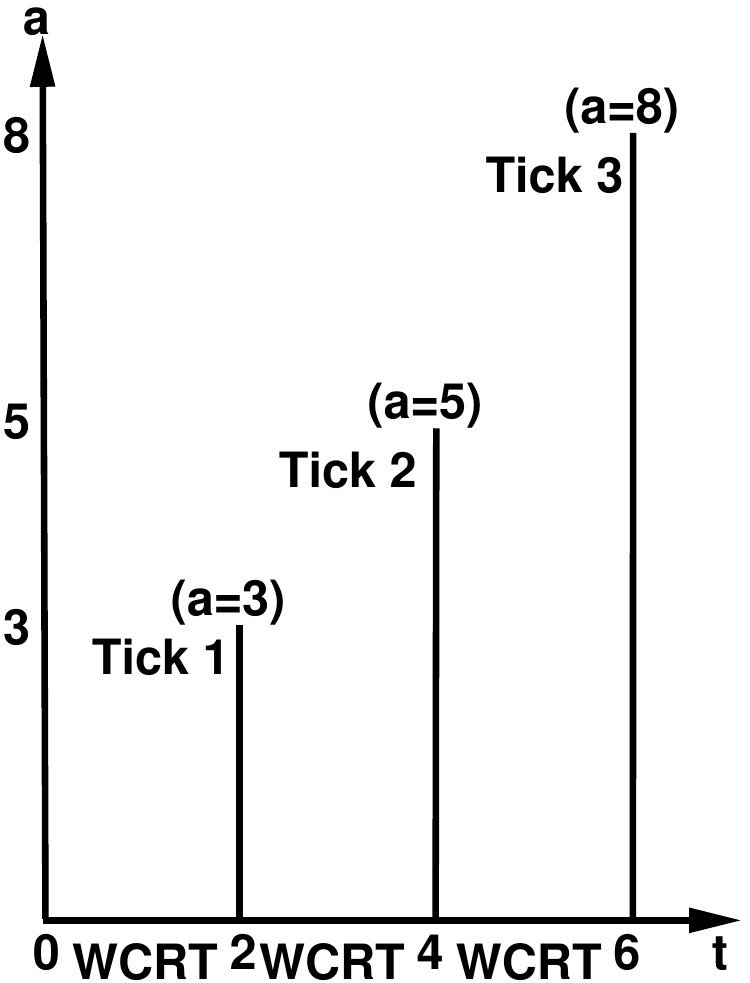}
    \label{fig:17d}}
  \hspace{20pt}
  \subfloat[Correct behavior after insertion of $\mathbf{pause}$ after
  the assignment statement \texttt{a = 1}]{
    \includegraphics[scale=0.35]{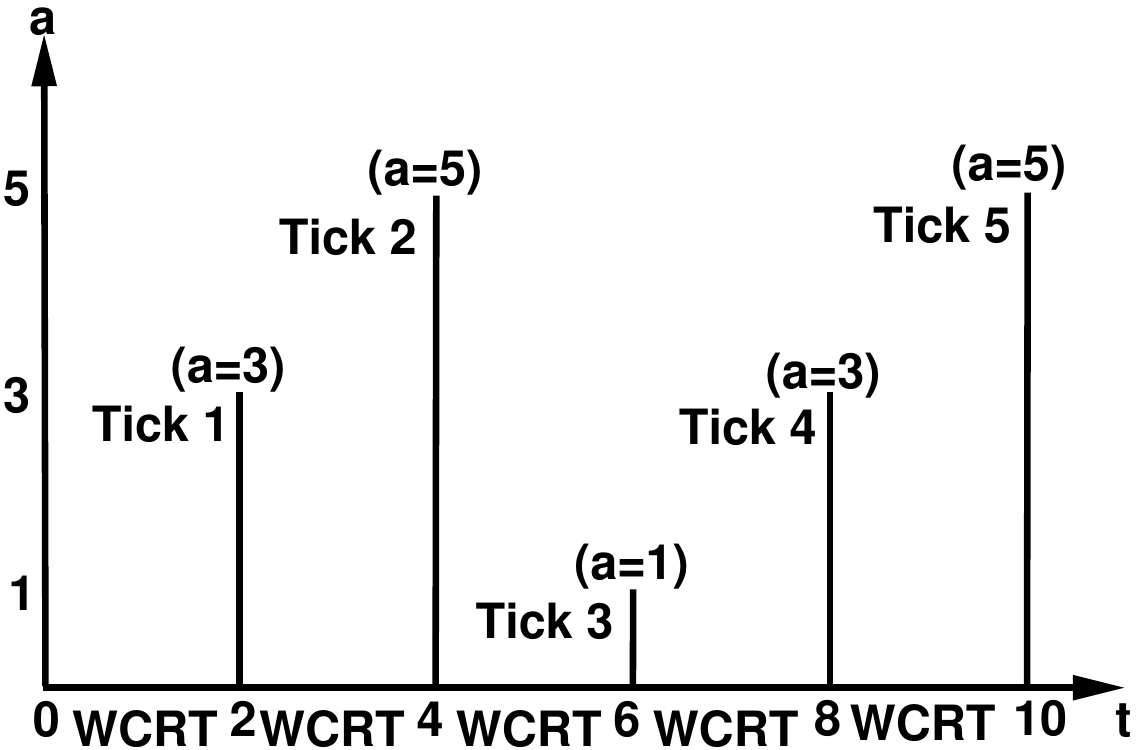}
    \label{fig:17e}}
  \caption{Schizophrenic flow actions in HySysJ}
  \label{fig:17}
\end{figure}

Consider the example code snippet in Figure~\ref{fig:17a}. The
continuous variable \texttt{a} evolves until it reaches 5. This
evolution might be preempted if a \texttt{FAULT} signal is present from
the environment. Once the evolution of the variable is completed or
preempted, \texttt{a} is reset and then evolution begins again, at least
that is the expectation. Given that the initial value of \texttt{a} is 1
and it needs to evolve until it reaches the value of 5 (assuming WCRT =
2) the loop is bounded by two ticks.

The timing behavior of Figure~\ref{fig:17a} is shown in
Figure~\ref{fig:17c}. Let us assume that the \texttt{FAULT} signal does
occur in the first logical tick. Due to delayed signal semantics, the
$\mathbf{abort}$ statement responds to the \texttt{FAULT} signal only in
the second tick. Thus, at the end of the first tick, \texttt{a} takes
the value 3. In the second program transition, $[2,4)$, the evolution of
the variable \texttt{a} stops due to preemption and \texttt{a} is
assigned the value 1. But, due to the $\mathbf{loop}$ statement (line~3)
the program control flow reenters the $\mathbf{do}$ $\mathbf{until}$
block, thereby evolving \texttt{a} again in the same program
transition. Thus, in the second program transition \texttt{a} has two
values: 1 due to the instantaneous assignment statement and \texttt{a =
  a + WCRT} do to reenterance into the flow action. Effectively,
\texttt{a} takes two different values \textit{simultaneously}, this is
termed schizophrenia.

The associative and commutative combination operators defined during
continuous variable declaration are used to resolve such schizophrenic
behavior. In Figure~\ref{fig:17a}, \texttt{a} is declared with the
combination operator $\mathbf{op+}$. This means, if \texttt{a} takes two
different values in the same program transition, then the two values
need to be added together and the result is the final value of
\texttt{a} at the end of the tick. For the program in
Figure~\ref{fig:17a}, during the second program transition, \texttt{a}
takes on two different values: 1 and \texttt{a = 3 + 2 = 5}. Recall that
\mbox{WCRT = 2} and \texttt{a} has the value 3 from the previous
tick. These two values are combined, via addition, together to give the
final result of 6 at the end of the second logical tick as shown in
Figure~\ref{fig:17c}.

The timing behavior of the program in Figure~\ref{fig:17a} without fault
is shown in Figure~\ref{fig:17d}. This again is not the expected
behavior, since the designer expects to reset \texttt{a} once it reaches
the value 5. Thus, the expected value of \texttt{a} is 3 at the end of
the third tick, but the actual value is 8. This unexpected behavior
stems from the fact, that even without faults, the program reenters the
$\mathbf{do}$ $\mathbf{until}$ block due to the $\mathbf{loop}$
statement in the third program transition -- $[4,6)$. Hence, instead of
resetting the value of \texttt{a} to 1, \texttt{a} takes two different
values: 1 and 7 simultaneously, which get combined via the addition
operator to get the final result of 8. Thus, unlike in hybrid automaton,
where reset actions instantaneously reset the value of continuous
variables, resetting the continuous variable requires insertion of the
$\mathbf{pause}$ construct after the assignment statement in HySysJ. The
behavior with insertion of a $\mathbf{pause}$ statement after assignment
statement \texttt{a = 1} is shown in Figure~\ref{fig:17e}.

\subsubsection{Write-write semantics}
\label{sec:write-write-semant-2}

Writing simultaneously to the same continuous variable in the same
program transition is allowed in HySysJ. Writing simultaneously can be
achieved via synchronous parallel composition and these simultaneous
writes to the same continuous variable are resolved using the same
technique (combination operators) as described in
Section~\ref{sec:schizophrenia}. Two examples of simultaneous writes are
shown in Figures~\ref{fig:18a} and~\ref{fig:18c}. The timing behavior
(see Figure~\ref{fig:18b}) is as expected in case of
Figure~\ref{fig:18a}. In the first tick, \texttt{a} takes the value 3,
even though the initial condition (and subsequent reset value) is 0, due
to the combination operator $\mathbf{op+}$. 

\newbox{\swtwo}
\begin{lrbox}{\swtwo}
  \begin{lstlisting}[mathescape,style=sysj,morekeywords={until,wait_inbetween,
      cont,input,signal,loop,abort,await,emit,present,trap,pause,exit,delay,suspend}]
    cont a op+ = 0;
    do {a$'$ = 1 || a$'$ = 1} until (a <= 4)
  \end{lstlisting}
\end{lrbox}

\newbox{\swone}
\begin{lrbox}{\swone}
  \begin{lstlisting}[mathescape,style=sysj,morekeywords={until,wait_inbetween,
      cont,input,signal,loop,abort,await,emit,present,trap,pause,exit,delay,suspend}]
    cont a op+ = 0;
    loop {
      do {a$'$ = 1} until (a <= 4)
      ||
      {a = 1; pause};
      a = 0; //resetting to 0.
      pause
    }
  \end{lstlisting}
\end{lrbox}

\begin{figure}[t!]
  \centering
  \subfloat[Example of simultaneous writes to the same continuous
  variable in HySysJ. WCRT= 2]{
    \usebox\swone
    \label{fig:18a}}
  \subfloat[The timing diagram for Figure~\ref{fig:18a}]{
    \includegraphics[scale=0.35]{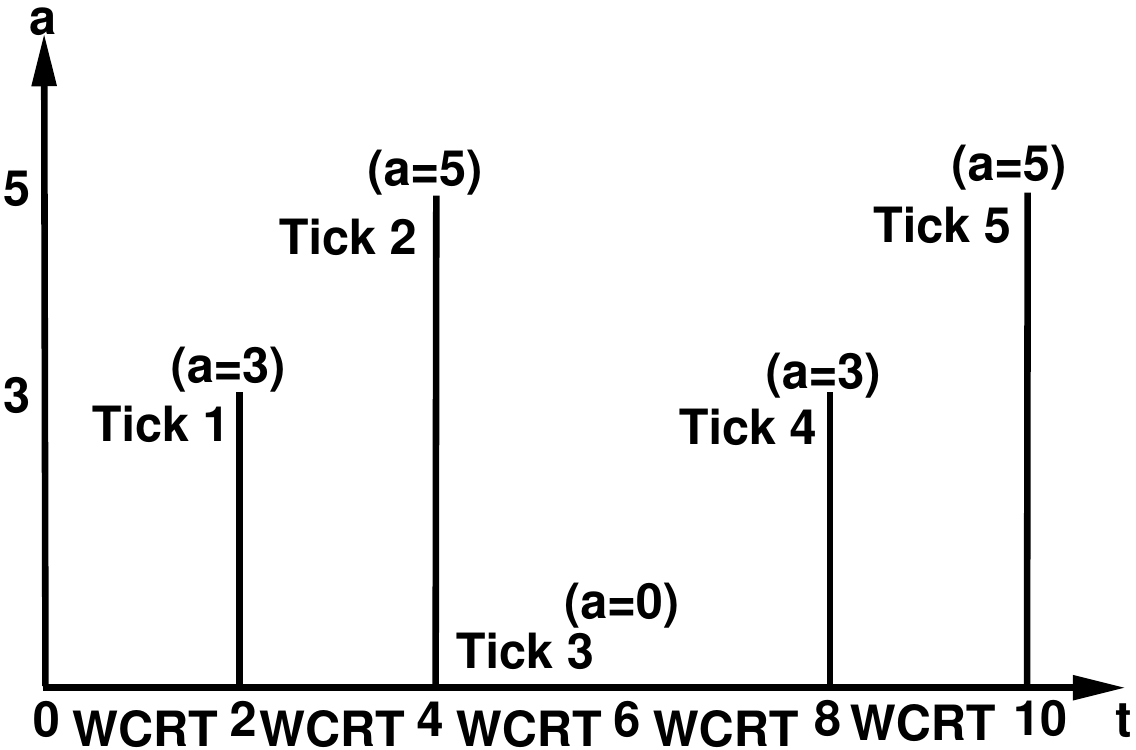}
    \label{fig:18b}}
  
  \subfloat[Example of simultaneous writes to the same continuous
  variable in HySysJ. WCRT=2]{ \usebox\swtwo
    \label{fig:18c}}
  \hspace{20pt}
  \subfloat[The timing diagram for Figure~\ref{fig:18c}]{
    \includegraphics[scale=0.4]{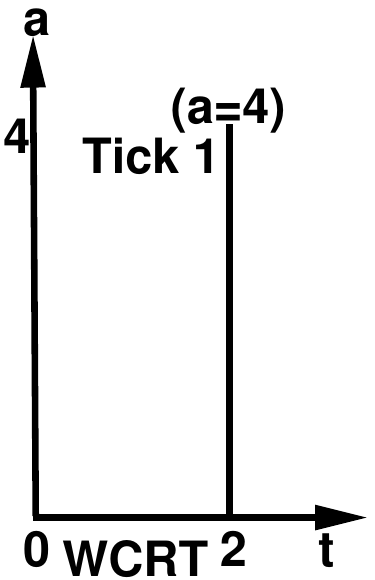}
    \label{fig:18d}}
  \caption{Write-write semantics in HySysJ}
  \label{fig:18}
\end{figure}

\begin{algorithm}[t!]
  \begin{minipage}{1.0\linewidth}
    \SetAlgoLined
    \SetKwInOut{Input}{Input}
    \Input{$\Omega$: a list of ODEs from one $\mathbf{do}$ block}
    \Input{$expr$: the $\mathbf{until}$ expression}
    \Input{$\mathcal{V}$: the set of continuous variables in
      $\Omega$}
    \Input{$\mathcal{M}$: the map from continuous variable to combine
      operator~\footnote{One can statically check at compile time that
        the combine operator is linear}}
    \KwResult{a Boolean value}
    let $\Delta$ $\leftarrow$ $\emptyset$\;
    \For {each $v$ in $\mathcal{V}$} {
      \tcp{$\mathcal{R}$ contains the rate of change}
      $\mathcal{R}$ $\leftarrow$ get\_rhos (filter ($\Omega$, $v$))\;
      \uIf {$|\mathcal{R}| > 1$} { \label{alg:2:l1}
        $\Gamma$ $\leftarrow$ (map ($\lambda\_ \rightarrow [\![v]\!]$)
        $\mathcal{R}$) \label{alg:2:l2}\;
        \tcp{Compute the value of $v$ two ticks from now}
        \For {i in 0..1} {
          \tcp{$\gamma$ is the current value of $v$}
          \tcp{$\rho$ is the rate of change of $v$}
          $\tau$ $\leftarrow$ reduce ($\mathcal{M}.get(v)$,\\
          (map ($\lambda\rho.\lambda\gamma \rightarrow
          \gamma + \rho * WCRT$) $\mathcal{R}$ $\Gamma$)) \;
          $\Gamma$ $\leftarrow$ (map ($\lambda\_ \rightarrow [\![\tau]\!]$) $\Gamma$)\;
          $i$ $\leftarrow$ $i+1$\;
        }
        $\tau$ $\leftarrow$ reduce($\mathcal{M}.get(v)$, $\Gamma$)\;
        $\Delta$ $\leftarrow$ $\Delta \cup \{v \rightarrow
        \tau\}$ \label{alg:2:l3}\;
      }
      \Else {
        $\tau$ $\leftarrow$ $[\![v]\!]$ + $2$ * $\mathcal{R}.get(0)$  * WCRT\;
        $\Delta$ $\leftarrow$ $\Delta \cup \{v \rightarrow \tau\}$\;
      }
    }
    \Return holds\_at\_delta ($expr$, $\Delta$)\;
    \caption{New algorithm to calculate TTL}
    \label{alg:2}
  \end{minipage}
\end{algorithm}

HySysJ also allows simultaneous writes within the same $\mathbf{do}$
blocks as shown in Figure~\ref{fig:18c}. The algorithm
(Algorithm~\ref{alg:1}) needs to be modified now that simultaneous
writes to the same variable are allowed within the same $\mathbf{do}$
block. The new \texttt{TTL} procedure is shown in Algorithm~\ref{alg:2}.

In the modified version, a new input argument is required: a map from
the continuous variable to its corresponding combine operator. The
overall result of Algorithm~\ref{alg:2} is the same Boolean value as
Algorithm~\ref{alg:1}, except, that combine operators are now invoked to
calculate the value of all continuous variables, two logical ticks from
the current tick, which evolve simultaneously within the same flow
action.

The new \texttt{TTL} procedure can be described with the example program
code in Figure~\ref{fig:18c}. Continuous variable \texttt{a} is evolving
simultaneously and linearly until it is less than or equal to 4. In
Algorithm~\ref{alg:2}; $\mathcal{V} = \{a\}$, $\Omega = [a'=1, a'=1]$
and $\mathcal{M} = \{a \rightarrow +\}$. First, Algorithm~\ref{alg:2}
checks if variable $a$ is being modified by more than 1 ODE
simultaneously (line~\ref{alg:2:l1}), which is true in this case. The
then branch is taken and value of $\tau$ is calculated as 12
(lines~\ref{alg:2:l2}-\ref{alg:2:l3}), thus, $\Delta = \{a \rightarrow
12\}$, which does not satisfy the invariant. Hence, the program
terminates at end of the very first program transition.


The resultant timing behavior is shown in Figure~\ref{fig:18d}. In
Algorithm~\ref{alg:2}, we especially need to check that combine operator
is linear, in order to enforce compatibility with linear hybrid
automaton. A programmer might use the $\mathbf{op*}$ (multiplication)
operator to combine simultaneously evolving continuous variables, which
models a higher order ODE.\footnote{Higher order ODEs resulting from
  multiplication operator can be accommodated into HySysJ, but we leave
  this as future work.} Programs combining continuous variables with
non-linear combine operator are rejected at compile time.

\subsubsection{Read-write semantics}
\label{sec:read-write-semantics}

HySysJ allows simultaneous, using synchronous parallel operator $||$,
reading and writing to a single continuous variable. Consider the
example in Figure~\ref{fig:13a}. Continuous variable \texttt{a} evolves
linearly with the driving clock. Simultaneously the value of \texttt{a}
is checked in the $\mathbf{if}$ block. If value of \texttt{a} is between
0 and 2 then signal \texttt{S1} is emitted, else signal \texttt{S2} is
emitted. Furthermore, this branching condition is encapsulated in a
$\mathbf{loop}$. This $\mathbf{loop}$ is preempted once \texttt{a} takes
the value 5. In this case the (assuming again that WCRT = 2) flow action
terminates after 2 ticks.

The timing behavior is shown in Figure~\ref{fig:13b}. In the very first
program transition ($[0,2)$) the value of \texttt{a} is 1. Since the
continuous variables are only updated at the end of the current tick,
the branching condition is satisfied and signal \texttt{S1} is emitted
at the end of the first tick and \texttt{a} takes the value 3. In the
second program transition ($[2,4)$) the branching condition is again
satisfied, again signal \texttt{S1} is emitted and at the end of this
tick \texttt{a} takes the value 5. At the start of the next transition,
the flow invariant does not hold, and hence, in the third transition,
signals \texttt{R} and \texttt{S2} are emitted. Variable \texttt{a} also
stops evolving further. Finally, the program terminates after tick 4,
due to delayed signal semantics.

Simultaneous reading and writing of continuous variables works in HySysJ
due to the delayed semantics. Simultaneous read-write on continuous
variables would need to be rejected if reading the value of a continuous
variable would read the currently evolving value, which is undefined
(and unstable) during the program transition. A continuous variable only
takes a defined (and stable) value at the end of ticks.

\newbox{\iecf}
\begin{lrbox}{\iecf}
  \begin{lstlisting}[mathescape,style=sysj,morekeywords={until,cont,signal,loop,abort,await,emit,present,trap,pause,exit,delay,suspend}]
    signal S1, S2, R;
    cont a = 1;
    abort (R) {
      {do {a$'$ = 1} until (a <= 5); emit R} 
      || 
      {loop {if (a>=0 && a <= 2) emit S1 else emit S2; pause}}
    }
    pause
  \end{lstlisting}
\end{lrbox}

\newbox{\iecs}
\begin{lrbox}{\iecs}
  \begin{lstlisting}[mathescape,style=sysj,morekeywords={until,cont,signal,loop,abort,await,emit,present,trap,pause,exit,delay,suspend}]
    cont a = 0;  a = a + 1;
  \end{lstlisting}
\end{lrbox}

\begin{figure}[t!]
  \centering \subfloat[Example of simultaneous read-write in HySysJ]{\usebox\iecf \label{fig:13a}}

  \subfloat[Timing behavior for Figure~\ref{fig:13a}] {
    \includegraphics[scale=0.35]{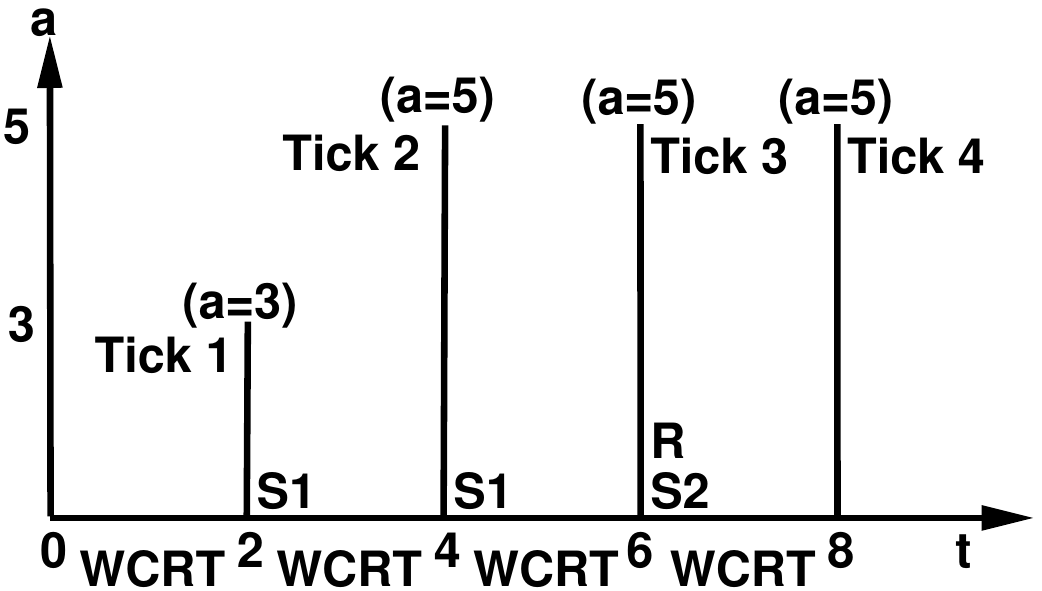}
    \label{fig:13b}
  }
  \caption{Example of read-write semantics on continuous variables in
    HySysJ. Assume that WCRT=2}
  \label{fig:13}
\end{figure}



\section{Determining the value of WCRT for the discrete plant model}
\label{sec:determ-value-wcrt}

The rewrite semantics approximate the plant model in the discrete-time
domain. This raises the question -- \textit{what should be the value of
  WCRT?}

The best approximation would be to allow WCRT to \textit{approach zero},
which is equivalent to performing a definite integration on a continuous
function, representing the plant, as is done in the hybrid
automaton. Another approach is to use zero crossings\footnote{Roughly
  speaking, zero crossing is an event occuring during the integration of
  an ODE, when some expression changes sign from negative to positive.}
and using non-standard analysis as is done in hybrid data-flow
languages~\cite{DBLP:conf/hybrid/BourkeP13}, \cite{simulink}. But, both
these approaches do not (and cannot) consider the time taken for
discrete control transitions. We take a different approach to
determining the value of WCRT, which is tightly related to the
definition of \textit{observability} in classical supervisory control
theory.

Consider a linear, time invariant (LTI), discrete-time
plant\footnote{Every hybrid automaton models a liner time-invariant
  plant in the continuous-time domain. But, now that we have discretized
  the plant, we can use a linear discrete-time invariant system.} in the
state space form as shown in Equation~(\ref{eq:2}). The status of the
plant as observed by the discrete controller is shown in
Equation~(\ref{eq:3})~\footnote{In our case, due to delayed semantics,
  Equation~\ref{eq:3} is actually: \mbox{$\mathbf{y}(k+1) = \mathbf{C}_d
    \mathbf{x}(k)$}}, where $\mathbf{x}(k) \in \mathbb{R}^n$,
$\mathbf{y}(k) \in \mathbb{R}^p$, $n$ and $p$ are the length of the
$\mathbf{x}$ and $\mathbf{y}$ vectors, respectively. $\mathbf{A}_d$ and
$\mathbf{C}_d$ are constant matrices of appropriate dimensions. Then the
observability matrix $\mathcal{O}(\mathbf{A}_d, \mathbf{C}_d)$ is
defined in Equation~\ref{eq:5}. Classical control theory states that one
can learn \textit{everything} about the dynamical behavior of the plant
by using only the observability matrix with the condition that the rank
of $\mathcal{O}$ is $n$.

\begin{scriptsize}
\begin{align}
   \mathbf{x}(k + 1) & = \mathbf{A}_d\mathbf{x}(k), \hspace{10pt} \mathbf{x}(0) =
   \mathbf{x}_0 
   \label{eq:2} \\
   \mathbf{y}(k) & = \mathbf{C}_d\mathbf{x}(k)
   \label{eq:3} \\
   \mathcal{O}(\mathbf{A}_d, \mathbf{C}_d) & =
   \left[
   \begin{array}{ccccc}
     \mathbf{C}_d &
     \mathbf{C}_d\mathbf{A}_d& 
     \mathbf{C}_d\mathbf{A}_d^2 &
     \ldots&
     \mathbf{C}_d\mathbf{A}_d^{n-1}
   \end{array}
   \right]^T
   \label{eq:5}
\end{align}
\end{scriptsize}

The definition of observability matrix is obtained via equating the
inductive definition of the plant model in Equation~(\ref{eq:2}) to the
observed outputs in Equation~(\ref{eq:3}).  Thus, this derivation of the
so called observability matrix holds \textrm{iff} the time taken by the
discrete control transition is equal to the resolution of the discrete
plant model. Hence, \textit{WCRT of plant is equal to the WCRT of the
  controller.}

Intuitively, plant does not really have a WCRT, since it is a continuous
function. We discretize the plant with the WCRT value equal to the one
calculated for the controller, independent of the plant model, in order
to adhere to the classical LTI discrete-time supervisory control theory.

\section{The manufacturing system revisited}
\label{sec:manuf-syst-revis}

\newbox{\fman}
\begin{lrbox}{\fman}
  \lstset{escapeinside={(*@}{@*)}} 
  \begin{lstlisting}[mathescape,style=sysj,morekeywords={until,cont,signal,loop,abort,await,emit,present,trap,pause,exit,delay,suspend}]
    {
      // The controller
      int signal S1, S2, S3 op+ = 0;
      signal DONE;
      loop {
        abort (S1) loop A: pause; (*@\label{lA}@*)
        if (?S1 == 1) { 
          ?S2 = 1; emit S2; 
          abort(DONE) loop B: pause (*@\label{lB}@*)
        };
        else {
          ?S3 = 1; emit S3; 
          abort(DONE) loop C: pause (*@\label{lC}@*)
        }
        ?S2 = 0; ?S3 = 0;
      }
    } || {
      // The plant
      cont x ,y; signal ERROR;
      loop {
        do {x$'$ = 1} until (x <= $\alpha$); (*@\label{d1}@*)
        if (x == $\alpha$) { (*@\label{l1}@*) 
          ?S1 = 1; emit S1;
          abort (S2 || S3) 
           do {x$'$ = 1} until (true);(*@\label{d2}@*)
          if(S2) 
           do {x$'$ = 1 || y$'$=1} until (y <= $\theta$)
          else 
           do {x$'$ = 1 || y$'$=-1} until (y >= 0);
          if (x < $\beta$) {
           do {x$'$ = 1} until (x <= $\beta$);
           x = 0; emit DONE;
          } else emit ERROR;
        } else emit ERROR; pause
      }
    }
  \end{lstlisting}

\end{lrbox}
\begin{figure}[t!]
  \subfloat[The manufacturing system in HySysJ] {
    \usebox\fman \label{fig:19a}}
  \subfloat[Timing behavior with $\alpha=3$, WCRT=2]{
    \includegraphics[scale=0.3]{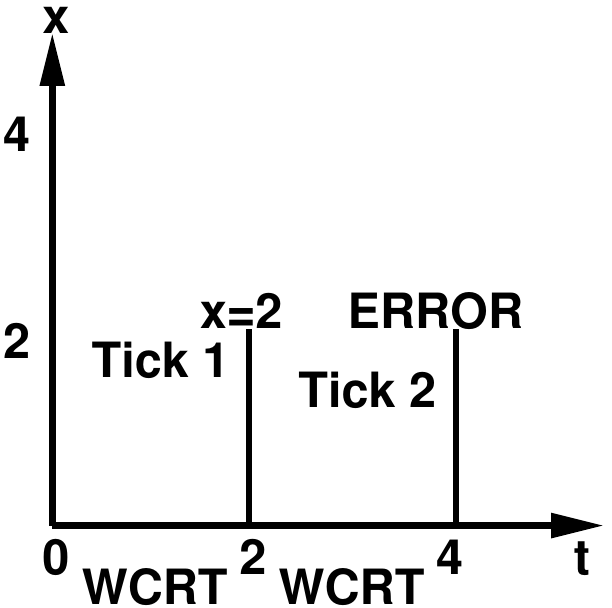}
    \label{fig:19b}}
  
  \subfloat[Timing behavior with $\alpha=2$, WCRT=2]{
    \includegraphics[scale=0.26]{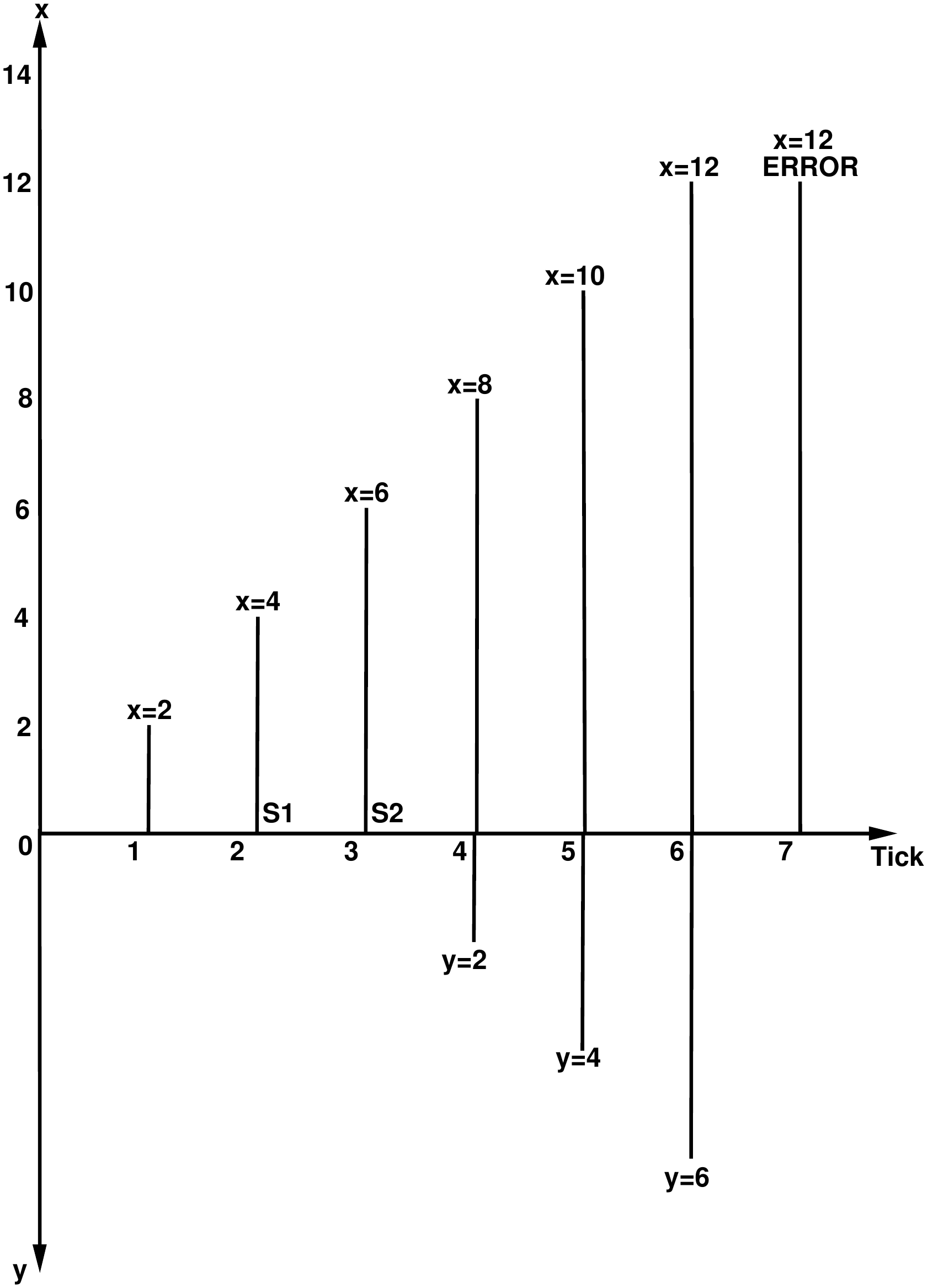}
    \label{fig:19c}}
  \subfloat[Timing behavior with $\alpha=1$, WCRT=1]{
    \includegraphics[scale=0.3]{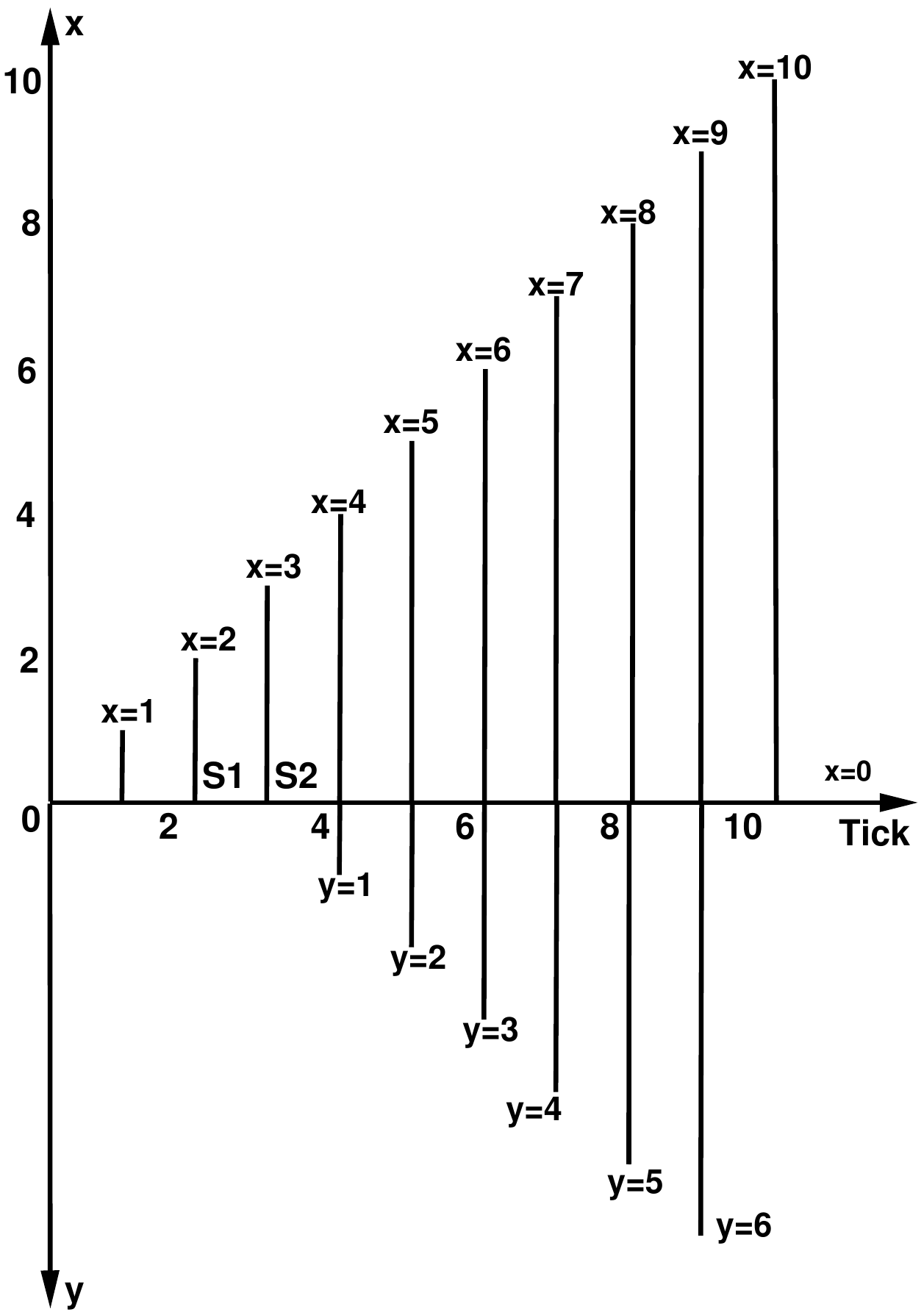}
    \label{fig:19d}}
  \caption{The manufacturing system implemented in HySysJ and its timing
    behavior}
  \label{fig:19}
\end{figure}

We can now revisit the manufacturing control system described in
Section~\ref{sec:motivating-example} and design it in
HySysJ. Furthermore, we verify two properties that are violated in the
hybrid automaton model of this closed loop control system: (1) the
\texttt{TRDC} controller is placed in the correct position so that it
can always observe the passage of the ice-cream on the first carousel
and (2) the non-zero mode-switch time is correctly accounted for in the
control system so that the ice-cream is routed to the correct
storage. The first property is related to the \textit{Observability}
criteria in the classical LTI discrete-time supervisory control theory
and the second is its dual; the \textit{Controllability} criteria. We
will emit an \texttt{ERROR} signal if either of the property is
violated. The verification tool then simply needs to guarantee that
there is no path in the system that reaches the state with emission of
the \texttt{ERROR} signal. This reachability test can be performed on a
symbolic transition system generated from the base SystemJ language, as
all HySysJ statements are rewritten into SystemJ, based on the formal
semantics presented in Appendix~\ref{sec:symb-trans-syst}.

\subsection{Synchronous parallel composition of the plant and the controller}
\label{sec:synchr-parall-comp}

Figure~\ref{fig:19a} shows the HySysJ program implementing the closed
loop control system. There are two synchronous parallel reactions: the
first is the controller and the second is the model of the plant
itself. Before delving into the details, we give an intuitive
justification for a synchronous composition of the plant model and the
controller.

\begin{align}
  \mathbf{x}(k + 1) & = \mathbf{A}_d\mathbf{x}(k) + \mathbf{B}_d
  \mathbf{u}(k), \hspace{30pt} \mathbf{x}(0) = \mathbf{x}_0
   \label{eq:6}
\end{align}

Consider the classical LTI discrete-time control system in
Equation~(\ref{eq:6}) in the state space form. The vector $\mathbf{u}$
takes the control system from some initial state
$\mathbf{x}(0) = \mathbf{x}_0$, $\mathbf{x}(k) \in \mathbb{R}^n$ to some
desired final state $\mathbf{x}(k_1) = \mathbf{x}_f$ in finite number of
time steps $k_1 < \infty$, \textrm{iff} the controllability matrix has
rank $n$. Instead of the controllability criteria, we right now are more
interested in the state transition system as presented in
Equation~(\ref{eq:6}). Observe that the whole control system
(represented by vector $\mathbf{x}$), which includes the controller
state and the plant state always make a transition \textit{together} to
the next state depending upon the current state and the current
input~\footnote{Of course, in our case, the plant responds to previous
  input rather than the current input, i.e., Equation~\ref{eq:6} is
  time-shifted, because of delayed semantics. But, the controllability
  criteria still remains the same.}. This simultaneous transition of the
plant state and the controller state implies a synchronous product (a la
the $||$ composition) of the plant and the controller state transition
systems.

\subsection{Observability}
\label{sec:observability}

We first verify that every ice-cream on the first carousel can be
detected by the \texttt{TRDC} in the manufacturing system from
Section~\ref{sec:motivating-example}. Figure~\ref{fig:19b}, shows the
timing diagram for the control system assuming WCRT=2 and
$\alpha=3$. When the program starts, the controller is in state
\texttt{A} (line~\ref{lA}) waiting for signal \texttt{S1}. The invariant
condition at line~\ref{d1} does not hold after the first tick, and
hence, after \texttt{x} takes the value 2, the $\mathbf{if}$ statement
is checked in the program transition: $[2,4)$. Of course, the
$\mathbf{if}$ condition does not hold (recall that $\alpha = 3$), and
the \texttt{ERROR} signal is generated.

Thus, placing \texttt{TRDC} at 3 units from the start of the first
carousel leads to violation of the observability criteria, a result that
was not detected in the hybrid automaton model. Next, we verify
controllability criteria of our manufacturing system.

\subsection{Controllability}
\label{sec:controllability}

Figures~\ref{fig:19c} and~\ref{fig:19d} show the timing behavior with
$\alpha = 2$/WCRT=2, and $\alpha=1$/WCRT=1ms, respectively. The
observability property is not violated in either case, since the
position of \texttt{TRDC} is exactly divisible the WCRT in both
cases. But, the controllability criteria is violated in
Figure~\ref{fig:19c}.

Upon observing the ice-cream, signal \texttt{S1} is emitted with the
correct TAG value: 1 in Figure~\ref{fig:19a}. The controller responds to
this emission in next tick by emitting signal \texttt{S2}. But, unlike
the hybrid automaton, the ice-cream on the first carousel keeps on
moving and reaches position 6 (in the third tick). This movement of the
ice-cream due to time-delayed mode-switch is modeled on line~\ref{d2},
which can only be preempted by the $\mathbf{abort}$ construct waiting
for emission of signal \texttt{S2} or \texttt{S3}. The rest of the
program behaves similar to the hybrid automaton in
Figure~\ref{fig:1b}. Once the diverter moves 6 arc-length units (recall
that $\theta = 6$), $x$ is already 12, i.e., the ice-cream is past the
end of the first carousel (recall that $\beta=10$) and diverted to the
incorrect storage station. Thereby, again emitting signal \texttt{ERROR}
in tick 7.

A possible configuration that results in correct control behavior is:
placing \texttt{TRDC} at position 1, and with a WCRT = 1 as shown in
Figure~\ref{fig:19d}.

\section{Conclusions}
\label{sec:conclusions}

In this paper we have presented an extension of the linear hybrid
automaton approach to simulation of hybrid systems by including
\textit{non-instantaneous} discrete control transition. Our solution is
to approximate the hybrid model in the discrete domain and preempting
the evolution of the continuous variables at the well established
discrete points in time. We have proposed new constructs in the
synchronous subset of the SystemJ language to model, simulate, and
verify the hybrid systems with non-instantaneous control
transitions. Furthermore, the sound rewrite semantics described in the
paper can be used to build symbolic transition systems, which can be
verified using classical model-checking tools. As a result of this work,
we were able to identify faults in a real hybrid manufacturing control
system that could not be found using simulation of classical linear
hybrid automaton model.

\bibliographystyle{IEEEtran}
\bibliography{/home/amal029/Dropbox/BIBLIOGRAPHY_DATABASE/main_bib}

%

\begin{IEEEbiography}{Avinash Malik}
\end{IEEEbiography}





\newpage
\appendices

\section{Proof for the rewrite semantics}
\label{sec:disc-rewr-semant}

In this section we discuss the correctness criteria for the rewrites
described in Section~\ref{sec:flow-actions}.

\subsection{Correctness of the flow action rewrite -- discretizing the
  plant}
\label{sec:soundn-flow-acti}

We have given a rewrite semantics for the flow actions in
Figure~\ref{fig:7}. Every flow action is converted into a loop with a
reduction function on the continuous variable. We prove the correctness
of this rewrite using discretization of derivatives in
Lemmas~\ref{lem:1} and~\ref{lem:2}.

\begin{lem}
  Given $a' = \rho$, $a[n+1] = a[n] + \rho \times WCRT$, where $a[n]$ is
  the value of $a$ at tick $n$.
 \label{lem:1}
\end{lem}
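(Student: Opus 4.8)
The plan is to obtain the stated recurrence as the exact forward-difference discretization of the flow equation $a' = \rho$ on the logical time grid induced by the synchronous clock. First I would recall, from Section~\ref{sec:synchr-model-comp}, that every HySysJ program advances in logical ticks separated by a fixed physical duration, and that this duration is taken to be $WCRT$ (the choice being justified in Section~\ref{sec:determ-value-wcrt}). Writing $t_n = n \times WCRT$ for the physical time at tick $n$, the value $a[n]$ is by definition the value of the continuous solution $a(t)$ at $t = t_n$.

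Next I would integrate the ODE over a single tick interval $[t_n, t_{n+1}]$:
\[
a(t_{n+1}) - a(t_n) = \int_{t_n}^{t_{n+1}} a'(t)\,\mathrm{d}t = \int_{t_n}^{t_{n+1}} \rho\,\mathrm{d}t = \rho \times (t_{n+1} - t_n) = \rho \times WCRT,
\]
where the third equality uses that $\rho$ is a constant rate of change, which is guaranteed for every linear ODE admitted inside a $\mathbf{do}$ block as noted in Section~\ref{sec:flow-actions}. Rearranging and substituting $a[n] = a(t_n)$ and $a[n+1] = a(t_{n+1})$ yields $a[n+1] = a[n] + \rho \times WCRT$, which is the claim. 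This is exactly the single-summation-step specialization of Equations~(\ref{eq:7}) and~(\ref{eq:8}), so the lemma doubles as the base case for the bounded reduction identity used in the rewrite of Figure~\ref{fig:7c}.

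The one point that deserves care — and essentially the only place this argument could go wrong — is the claim that the forward difference is \emph{exact} rather than merely a first-order approximation: for a general flow $a' = f(t,a)$, replacing the derivative by $(a[n+1] - a[n])/WCRT$ incurs a local truncation error of order $WCRT^2$, so the ``$=$'' in the statement would have to be weakened to ``$\approx$''. Here it is a genuine equality precisely because the right-hand side is the constant $\rho$, so the exact integral and the left-endpoint (Euler) quadrature coincide; I would make this explicit so that the subsequent use of the lemma in building a \emph{sound} rewrite is not conflated with a mere numerical-approximation argument. Everything else is a routine computation, so I expect no further obstacles.
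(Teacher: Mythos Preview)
Your argument is correct and lands on the same recurrence by the same underlying idea (forward-Euler discretization on the tick grid), but the framing differs slightly from the paper's. The paper starts from the difference-quotient approximation $\tfrac{da}{dt} \approx \tfrac{a(t+\Delta)-a(t)}{\Delta}$, sets it equal to $\rho$, substitutes $t = \Delta n$ and $\Delta = WCRT$, and reads off the recurrence; it never revisits the ``$\approx$'' and simply treats the result as an equality. You instead integrate $a' = \rho$ exactly over $[t_n,t_{n+1}]$ and obtain the recurrence as a genuine identity, then explicitly isolate the reason this is exact rather than first-order accurate (constant right-hand side). The two derivations are equivalent in content, but your version is the more careful one: it makes precise the point the paper leaves implicit, namely that for the linear constant-rate ODEs admitted in $\mathbf{do}$ blocks the discretization incurs no truncation error, which is what justifies using ``$=$'' in the lemma and in the subsequent rewrite.
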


\begin{proof}
  \begin{align*}
   \frac{da}{dt} \approx \frac{a(t+\Delta) - a(t)}{\Delta}\\
   \therefore \frac{a(t+\Delta) - a(t)}{\Delta} = \rho \\
  \end{align*}
  $let, t = \Delta\times n$ and writing $a[n] = a(\Delta\times n)$, we
  get:
  \begin{align*}
   a(\Delta \times (n+1)) - a(\Delta \times n) = \Delta
   \times \rho\\
   \therefore a[n+1] = a[n] + \rho \times \Delta\\
  \end{align*}
Finally, in our case, $\Delta = WCRT$ hence:
\begin{align*}
   a[n+1] = a[n] + \rho \times WCRT
\end{align*}
\end{proof}

\begin{lem}
 Given $a[0]$, $a[k] = a[0] + \sum_{n=0}^{k-1} \rho \times WCRT$. 
 \label{lem:2}
\end{lem}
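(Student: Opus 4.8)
The plan is to derive Lemma~\ref{lem:2} from Lemma~\ref{lem:1} by a straightforward induction on the tick index $k$, unfolding the one-step recurrence $a[n+1] = a[n] + \rho \times WCRT$ until it bottoms out at the initial value $a[0]$. The only hypothesis we need to carry along is that the rate $\rho$ is a constant of the ODE $a' = \rho$ (which is exactly the linear-ODE restriction imposed on flow actions in Section~\ref{sec:flow-actions}), so that the same $\rho \times WCRT$ increment is added at every step and may legitimately be collected under a single summation sign.

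First I would set up the base case $k = 0$: the sum $\sum_{n=0}^{-1} \rho \times WCRT$ is empty and therefore equals $0$, so the claimed identity reduces to $a[0] = a[0]$, which holds trivially. Next I would state the inductive hypothesis: assume $a[k] = a[0] + \sum_{n=0}^{k-1} \rho \times WCRT$ for some $k \geq 0$. For the inductive step I would invoke Lemma~\ref{lem:1} with $n = k$ to get $a[k+1] = a[k] + \rho \times WCRT$, substitute the inductive hypothesis for $a[k]$, and absorb the trailing term into the sum:
\begin{align*}
  a[k+1] &= a[k] + \rho \times WCRT\\
         &= a[0] + \sum_{n=0}^{k-1} \rho \times WCRT + \rho \times WCRT\\
         &= a[0] + \sum_{n=0}^{k} \rho \times WCRT,
\end{align*}
which is the claimed identity with $k+1$ in place of $k$, closing the induction.

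Since $\rho \times WCRT$ does not depend on the summation variable $n$, I would finish by noting that the sum collapses to $a[k] = a[0] + k \cdot \rho \times WCRT$, which is precisely Equation~(\ref{eq:8}) and confirms that the bounded reduction in the rewrite of Figure~\ref{fig:7c} computes exactly the discretized integral of Equation~(\ref{eq:7}). Honestly, there is no real obstacle here: the argument is a routine telescoping induction, and the only point deserving a sentence of care is flagging the constancy of $\rho$, without which the increments at successive ticks could differ and the closed form would not hold.
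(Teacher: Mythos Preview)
Your proposal is correct and takes the same approach as the paper: the paper's proof simply says ``Follows from the inductive definition of $a[n+1]$ in Lemma~\ref{lem:1},'' and your induction on $k$ is exactly that argument spelled out in full. Your additional remarks on the constancy of $\rho$ and the collapse to Equation~(\ref{eq:8}) are helpful elaborations but go beyond what the paper records.
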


\begin{proof}
  Follows from the inductive definition of \mbox{$a[n+1]$} in
  Lemma~\ref{lem:1}.
\end{proof}

Lemma~\ref{lem:1} gives the approximation of a derivative into the
discrete time domain. Every synchronous program is clock-driven by
definition, and hence, from Lemma~\ref{lem:1}, for any tick $n+1$ the
valuation of the continuous variable $a$, i.e., $a[n+1]$ is dependent
upon the current value $a[n]$. Furthermore, given the initial value
$a[0]$, the value of the continuous variable at some tick $k$ is given
by Lemma~\ref{lem:2}, which is a reduction: a bounded summation on $\rho
\times WCRT$ added to the initial value. Every bounded summation is
written as a bounded loop and hence, the rewrite holds.

Next, is the question about finding the bound (or equivalently $k$) in
Lemma~\ref{lem:2}. In the rewrites, this bound is calculated by the
algorithms computing the \texttt{TTL}. The \texttt{TTL} algorithms are
evaluated at program execution time. In general, one cannot determine
the number of loop iterations (equivalently $k$) at compile time,
because the value of $k$ depends upon the valuation of the continuous
variable, since the invariant conditions are specified on the valuation
of the continuous variables rather than being bounded on time as in
definite integrals. The proposed \texttt{TTL} algorithms only ever
look-ahead 2 logical ticks to bound the loop. We need to show that this
2 tick look-ahead is \textit{sufficient} to guarantee that the proposed
rewrites never violate the invariant conditions.

\begin{lem}
  Given invariant condition ($expr$) of the flow construct holds at
  $a[0]$, it is \textbf{necessary} that $a[0] + \rho \times WCRT$
  satisfies the invariant.
  \label{lem:3}
\end{lem}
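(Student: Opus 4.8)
The plan is to read the statement as a \emph{necessity} claim about the rewrite of the flow construct shown in Figure~\ref{fig:7c}: that rewrite is forced to execute the body of its temporal loop at least once, so the value $a[0] + \rho \times WCRT$ is always produced and made observable by the rewrite, and therefore — if the rewrite is to faithfully simulate the flow action without ever leaving the invariant region — this value must satisfy the invariant $expr$.

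First I would recall the three consequences of the delayed signal semantics that make a single loop iteration unavoidable: signal \texttt{R} has status \texttt{false} upon declaration; the enclosing \texttt{abort (R)} tests the status of \texttt{R} in the \emph{previous} tick; and \texttt{R} is emitted only from inside the loop body. From these it follows that on the first program transition $[0,1)$ the abort cannot fire — \texttt{R} has never been emitted, and even if it is emitted on this very transition its effect is postponed to $[1,2)$ — so the assignment incrementing \texttt{a} by $\rho \times WCRT$ executes and, by Lemma~\ref{lem:1}, the value of $a$ observable at the end of the transition is $a[1] = a[0] + \rho \times WCRT$. This is exactly the \emph{at least one iteration is guaranteed} property already argued informally in Section~\ref{sec:flow-actions-1}.

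Next I would combine this with the correctness requirement for the rewrite. By hypothesis $expr$ holds at $a[0]$, i.e.\ the flow construct was entered legitimately (mirroring the hybrid-automaton rule that a discrete step into a location lands in that location's invariant). Since $a[1] = a[0] + \rho \times WCRT$ is a state the rewrite is guaranteed to reach and expose, the rewrite can keep the execution inside the invariant region only if $a[1]$ satisfies $expr$; hence it is \textbf{necessary} that $a[0] + \rho \times WCRT$ satisfy $expr$. Equivalently, any flow action for which $expr$ holds at $a[0]$ but fails at $a[0] + \rho \times WCRT$ is degenerate at resolution $WCRT$ — the discrete model cannot advance by less than one $WCRT$ — and the rewrite simply reports this by taking one step, so such situations fall outside the scope of the soundness claim.

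The main obstacle I anticipate is the bookkeeping around the delayed semantics: one must argue carefully that nothing can suppress that first update — not only the \texttt{abort (R)} of the rewrite itself, but also any enclosing preemption context in which a flow action may be nested (cf.\ Figure~\ref{fig:12e}). For the bare rewrite this is immediate from the three facts above; for the general case it reduces to the observation that enclosing \texttt{abort}/\texttt{suspend} constructs are themselves one-tick delayed and so cannot act before the first transition $[0,1)$ completes. Beyond that, the argument is a direct application of Lemma~\ref{lem:1}, so I expect no further difficulty.
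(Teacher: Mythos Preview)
Your proposal is correct and follows essentially the same approach as the paper: the paper's proof is a single sentence observing that preemption is always delayed by one tick, so the flow action executes at least once, and your argument is a more detailed elaboration of precisely that observation (spelling out the three delayed-semantics facts that force the first iteration and then drawing the necessity conclusion). The extra discussion of enclosing preemption contexts goes beyond what the paper writes but is consistent with its reasoning.
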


\begin{proof}
  Follows from the observation that preemption is always delayed by 1
  tick and hence, the flow action will be executed for at least 1 tick.
\end{proof}

\begin{rem}
  The invariant condition should always hold when we first enter the
  rewrite, i.e., $a[0]$ always satisfies the flow invariant by
  definition. Moreover, the flow action, from Lemma~\ref{lem:3}, will
  always be executed at least once. Every continuous variable is updated
  only at the end of the tick, hence, the WCRT value needs to be small
  enough so that at the end of the first tick, $a[0] + \rho \times WCRT$
  does not violate the flow invariant.
\end{rem}

\begin{thm}
  Given invariant condition ($expr$) of the flow construct holds at
  $a[0]$ it is \textbf{sufficient} to show that invariant does not
  hold at $a[2]$ for the rewrite to be correct.
  \label{thm:1}
\end{thm}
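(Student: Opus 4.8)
The plan is to show that the only way the rewrite of Figure~\ref{fig:7c} leaves its temporal loop --- emitting \texttt{R} exactly when the \texttt{TTL} look-ahead reports that $expr$ fails two ticks hence --- fires at precisely the right logical tick, so that the continuous variable is driven to its last admissible valuation and no further, and $expr$ is never violated while the flow construct is active. First I would fix the bookkeeping. By Lemma~\ref{lem:1}, during logical tick $n$ the value of $a$ seen inside the loop body (both by the assignment \texttt{a = a + $\rho$ * WCRT} and by the argument passed to \texttt{TTL}) is the previous-tick value $a[n]$; the assignment makes $a$ stabilise, at the closing \textbf{pause}, to $a[n+1] = a[n] + \rho \times WCRT$; and \texttt{TTL} invoked with current value $a[n]$ returns, by two applications of Lemma~\ref{lem:1} exactly as in line~\ref{alg:1:l1} of Algorithm~\ref{alg:1}, whether $a[n+2] = a[n] + 2\rho \times WCRT$ satisfies $expr$ --- for $n=0$ this is precisely the ``$a[2]$'' of the theorem statement. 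Next I would record the operational fact about the \textbf{abort}$(R)$ construct guarding the reduction \textbf{loop}, under delayed signal semantics and strong preemption: if \texttt{R} is emitted in tick $n$ then the loop body runs for the last time in tick $n$ and is skipped from tick $n+1$ onwards, so the stable value of $a$ after the flow construct is $a[n+1]$; conversely, the body reaches tick $n$ iff \texttt{R} was emitted in no tick $<n$. Since \texttt{R} is \texttt{false} on declaration, the body always runs in tick $0$, so $a[1]$ is always produced --- this is the content of Lemma~\ref{lem:3} --- and by the Remark (which demands $WCRT$ small enough that $a[1]$ still satisfies $expr$) together with the hypothesis of the theorem, $expr$ holds at both $a[0]$ and $a[1]$.

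With this in place I would let $k$ be the largest index for which $a[0],a[1],\dots,a[k]$ all satisfy $expr$; this is well defined with $k \ge 1$, and $a[k+1]$ does not satisfy $expr$. (If the discrete iterates never violate $expr$, then \texttt{TTL} always returns \textbf{true}, \texttt{R} is never emitted, and the loop runs forever, which is the intended behaviour --- e.g.\ for $\mathbf{until}(\mathbf{true})$ --- so there is nothing to prove.) A short induction on $n$ using the two facts above then shows that the loop body executes precisely in ticks $0,1,\dots,k-1$: for $n \le k-2$ the body, once reached, does not emit \texttt{R}, since \texttt{TTL} checks $a[n+2]$ with $n+2 \le k$, which satisfies $expr$; and at tick $k-1$ (which is $\ge 0$ because $k \ge 1$) the body does emit \texttt{R}, since \texttt{TTL} checks $a[(k-1)+2] = a[k+1]$, which does not satisfy $expr$. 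Hence the body is preempted at tick $k$, and by the operational fact the stable value of $a$ once the flow construct completes is $a[k]$. Finally, at every tick $n \in \{0,\dots,k-1\}$ in which the body runs the observable valuation of $a$ is $a[n]$, which satisfies $expr$ because $n < k$, and after preemption the valuation is $a[k]$, which satisfies $expr$ by definition of $k$; so $expr$ is never violated while the flow action is active, and the rewrite halts the discretised evolution at the last grid point on which $expr$ holds --- precisely the linear hybrid automaton flow semantics restricted to the $WCRT$ sampling grid. This simultaneously explains why two is the right look-ahead: a one-tick look-ahead would emit \texttt{R} only in tick $k$, letting the body run once more and drive $a$ to the invariant-violating $a[k+1]$, whereas a three-tick look-ahead would emit \texttt{R} already in tick $k-2$ and stop one tick too early at $a[k-1]$; paired with the necessity established in Lemma~\ref{lem:3}, two ticks is thus both sufficient and minimal.

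I expect the main obstacle to be index bookkeeping rather than any conceptual difficulty: one has to keep straight that the read inside \texttt{a = a + $\rho$ * WCRT} and the argument to \texttt{TTL} both see the \emph{previous}-tick value $a[n]$ under delayed semantics, that \textbf{abort} is a strong preemption so no stray update of $a$ happens in tick $k$, and that composing ``look two ticks forward'' with ``emission and hence preemption is delayed by one tick'' cancels exactly one tick, leaving the loop to stop after producing $a[k]$ and before producing $a[k+1]$. A secondary point to dispatch cleanly is that the rewrite of Figure~\ref{fig:7c} is not itself wrapped in an enclosing \textbf{loop}, so no reincarnation of the continuous variable occurs and the step ``the stable value of $a$ is $a[n+1]$'' is unambiguous; the schizophrenic situations of Section~\ref{sec:schizophrenia} are orthogonal to this theorem and are handled separately by the combination operators.
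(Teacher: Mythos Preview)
Your proof is correct and shares the paper's core insight, but you organise the induction differently. The paper argues by \emph{re-indexing}: at each iteration of the temporal loop it renames the current value of $a$ to ``$a[0]$'' and then invokes Lemma~\ref{lem:3} locally, so that the sufficiency of checking ``$a[2]$'' is established once and carried forward by structural induction on the rewrite. You instead keep a fixed global tick index, define the terminal index $k$ (the last grid point on which $expr$ holds), and run a direct induction on $n$ to pin down exactly in which ticks the body executes and in which tick \texttt{R} is emitted. Your route is longer but is more explicit about the interaction between delayed signal semantics and the \textbf{abort}, handles the non-terminating case separately, and additionally argues that a two-tick look-ahead is \emph{minimal} --- a point the paper does not address. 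The paper's re-indexing argument is terser and makes the ``sufficient to check $a[2]$'' phrasing of the theorem statement literal, at the cost of leaving the operational details (when exactly \texttt{R} fires, what the final stable value of $a$ is) implicit.
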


\begin{proof}
  Follows from Lemma~\ref{lem:3} and induction on the structure of the
  rewrite in Figure~\ref{fig:7}. Observer that in the very first
  iteration (program transition from tick 0 to tick 1) of the loop,
  $a[0]$ is the programmer specified initial value or the default value
  of continuous variable $a$. The reduction statement computes the value
  $a[1]$ and updates $a$ with this value at the end of the tick. For the
  next iteration, following structural induction, $a[0]$ is now the
  value $a[1]$ computed in the last tick. Thus, for any loop iteration,
  representing the transition from tick $n$ to $n+1$, $a[0]$ holds
  summation of the past $n-1$ tick values, from the sum in
  Lemma~\ref{lem:2}, in $a[0]$. From Lemma~\ref{lem:3}, we know that
  $a[0] + \rho \times WCRT$ should always hold, and hence, it follows
  that $a[n] + \rho \times WCRT$ should also hold, which means we only
  need to look 2 ticks ahead to bound the temporal loop.

\end{proof}

\end{document}